%Indicator Function
%%%condition becomes assumption
%\usepackage{nonfloat}

\documentclass[11pt]{article}%
\usepackage{amsmath}
\usepackage{amsfonts}
\usepackage{amssymb}
\usepackage{graphicx}
\usepackage{mathtools}
\usepackage{epstopdf}
\usepackage{color}
\usepackage{float}
\usepackage{subfig}%
\setcounter{MaxMatrixCols}{30}
%TCIDATA{OutputFilter=latex2.dll}
%TCIDATA{Version=5.50.0.2890}
%TCIDATA{CSTFile=40 LaTeX article.cst}
%TCIDATA{Created=Sunday, November 27, 2011 12:28:40}
%TCIDATA{LastRevised=Sunday, September 11, 2016 23:13:36}
%TCIDATA{<META NAME="GraphicsSave" CONTENT="32">}
%TCIDATA{<META NAME="SaveForMode" CONTENT="1">}
%TCIDATA{BibliographyScheme=Manual}
%TCIDATA{<META NAME="DocumentShell" CONTENT="Standard LaTeX\Blank - Standard LaTeX Article">}
%TCIDATA{Language=American English}
%BeginMSIPreambleData
\providecommand{\U}[1]{\protect\rule{.1in}{.1in}}
%EndMSIPreambleData
\newtheorem{theorem}{Theorem}[section]

\newtheorem{condition}{Assumption}[section]

\newtheorem{corollary}{Corollary}[section]

\newtheorem{example}{Example}[section]

\newtheorem{lemma}{Lemma}[section]

\newtheorem{proposition}{Proposition}[section]
\newtheorem{remark}{Remark}[section]

\newenvironment{proof}[1][Proof]{\noindent\textbf{#1.} }{\ \rule{0.5em}{0.5em}}
\numberwithin{equation}{section}
\newcommand{\pr}{\mathbb{P}}

\newcommand{\HZ}{\textcolor[rgb]{0.0,0.0,0.0}}
\newcommand{\BIN}{\textcolor[rgb]{0,0,0}}

\textwidth=6.45in
\textheight=9.1in
\oddsidemargin=0in
\evensidemargin=0in
\topmargin=-0.5in
\begin{document}

\title{\textcolor[rgb]{0.0,0.0,0.0}{A} Unified Approach for Drawdown (Drawup) of
Time-Homogeneous Markov Processes}
\author{David Landriault\thanks{Department of Statistics and Actuarial Science,
University of Waterloo, Waterloo, ON, N2L 3G1, Canada (dlandria@uwaterloo.ca)}
\and Bin Li\thanks{Corresponding Author: Department of Statistics and Actuarial
Science, University of Waterloo, Waterloo, ON, N2L 3G1, Canada
(bin.li@uwaterloo.ca)}
\and Hongzhong Zhang\thanks{Department of IEOR, Columbia University, New York, NY,
10027, USA (hz2244@columbia.edu)}}
\date{{\small \today}}
\maketitle

\begin{abstract}
Drawdown (resp. drawup) of a stochastic process, also referred as the
reflected process at its supremum (resp. infimum), has wide applications in
many areas including financial risk management, actuarial mathematics and
statistics. In this paper, for general time-homogeneous Markov processes, we
study the joint law of the first passage time of the drawdown (resp. drawup)
process, its overshoot, and the maximum of the underlying process at this
first passage time. By using short-time pathwise analysis, under some mild
regularity conditions, the joint law of the three drawdown quantities is shown
to be the unique solution to an integral equation which is expressed in terms
of fundamental two-sided exit quantities of the underlying process. Explicit
forms for this joint law are found when the Markov process has only one-sided
jumps or is a L\'{e}vy process (possibly with two-sided jumps). The proposed
methodology provides a unified approach to
\HZ{study various drawdown quantities} for the general
class of time-homogeneous Markov processes.

\textit{Keywords}: Drawdown; Integral equation; Reflected process;
Time-homogeneous Markov process

\textit{MSC}(2000): Primary 60G07; Secondary 60G40

\end{abstract}

\baselineskip15.5pt

\section{Introduction}

We consider a time-homogeneous, real-valued, non-explosive, \BIN{c\`{a}dl\`{a}g}
Markov process $X=(X_{t})_{t\geq0}$ with state space $\mathbb{R}$
\footnote{The state space can \HZ{sometimes be relaxed} to
an open interval of $%
%TCIMACRO{\U{211d} }%
%BeginExpansion
\mathbb{R}
%EndExpansion
$
\HZ{(e.g., (0,+$\infty$) for geometric Brownian motions)}.
It is also possible to treat some general state space with complex boundary
behaviors. However, for simplicity, we choose $%
%TCIMACRO{\U{211d} }%
%BeginExpansion
\mathbb{R}
%EndExpansion
$ as the state space of $X$ in this paper.} defined on a filtered probability
space $(\Omega,\mathcal{F},\boldsymbol{F}=(\mathcal{F}_{t})_{t\geq
0},\mathbb{P})$
\textcolor[rgb]{0.0,0.0,0.0}{with a complete and right-continuous filtration.}
\HZ{Throughout, we silently assume that $X$ satisfies the strong Markov property (see Section III.8,9 of Rogers and Williams \cite{RW00}), and exclude Markov processes with monotone paths.}
The first passage time of $X$ above (below) a level $x\in%
%TCIMACRO{\U{211d} }%
%BeginExpansion
\mathbb{R}
%EndExpansion
$ is denoted by
\[
\textcolor[rgb]{0.0,0.0,0.0}{T_{x}^{+(-)}=\inf\left\{  t\geq0:X_{t}>(<)x\right\},}
\]
with the common convention that $\inf\emptyset=\infty$.
%\textcolor[rgb]{1,0,0}{DELETE THIS SENTENCE ??? Moreover, we assume
%$X$ is regular in the sense that
%\[
%\mathbb{P}_{y}(T_{x}^{+(-)}<\infty)>0,\quad \text{for all } x,y\in\mathbb{R}.
%\]}

The drawdown process of $X$ (also known as the reflected process of $X$ at its
supremum) is denoted by $Y=(Y_{t})_{t\geq0}$ with $Y_{t}=M_{t}-X_{t},$ where
$M_{t}=\sup_{0\leq s\leq t}X_{t}$. Let $\tau_{a}=\inf\{t>0:Y_{t}>a\}$ be the
first time the magnitude of drawdowns exceeds a given threshold $a>0$. Note
that $\left(  \textcolor[rgb]{0.0,0.0,0.0}{\sup_{0\leq s\leq t}}Y_{s}%
>a\right)  =\left(  \tau_{a}\leq t\right)  $ $\mathbb{P}$-a.s. Hence, the
distributional study of the maximum drawdown of $X$ is equivalent to the study
of the stopping time $\tau_{a}$. Similarly, the drawup process of $X$ is
defined as $\hat{Y}_{t}=X_{t}-m_{t}$ for $t\geq0,$ where $m_{t}=\inf_{0\leq
s\leq t}X_{t}$. However, given that the drawup of $X$ can be investigated via
the drawdown of $-X$, we exclusively focus on the drawdown process $Y$ in this paper.

Applications of drawdowns can be found in many areas. For instance, drawdowns
are widely used by mutual funds and commodity trading advisers to quantify
downside risks. Interested readers are referred to Schuhmacher and Eling
\cite{SE11} for a review of drawdown-based performance measures. An extensive
body of literature exists on the assessment and mitigation of drawdown risks
(e.g., Grossman and Zhou \cite{GZ93}, Carr et al. \cite{CZH11}, Cherny and
Obloj \cite{CO13}, and Zhang et al. \cite{ZLH13}). Drawdowns are also closely
related to many problems in mathematical finance, actuarial science and
statistics such as the pricing of Russian options (e.g., Shepp and Shiryaev
\cite{SS93}, Asmussen et al. \cite{AAP04} and Avram et al. \cite{AKP04}),
\BIN{De Finetti's} dividend problem\ (e.g., Avram et al.
\cite{APP07} and Loeffen \cite{L08}), loss-carry-forward taxation models
(e.g., Kyprianou and Zhou \cite{KZ09} and Li et al. \cite{LTZ13}), and
change-point detection methods (e.g., Poor and Hadjiliadis \cite{PH09}). \BIN{More specifically, in De Finetti's dividend problem under a fixed dividend barrier $a>0$, the underlying surplus process with dividend payments is a process obtained from reflecting $X$ at a fixed barrier $a$ (the reflected process' dynamics may be different than the drawdown process $Y$ when the underlying process $X$ is not spatial homogeneous). However, the distributional study of ruin quantities in De Finetti's dividend problem can be transformed to the study of drawdown quantities for the underlying surplus process; see Kyprianou and Palmowski \cite{KP07} for a more detailed discussion. Similarly, ruin problems in loss-carry-forward taxation models can also be transformed to a generalized drawdown problem for classical models without taxation, where the generalized drawdown process is defined in the form of $Y_t=\gamma(M_t)-X_t$ for some measurable function $\gamma(\cdot)$.}

The distributional study of drawdown quantities is not only of theoretical
interest, but also plays a fundamental role in the aforementioned
applications. Early distributional studies on drawdowns date back to Taylor
\cite{T75} on the joint Laplace transform of $\tau_{a}$ and $M_{\tau_{a}}$ for
Brownian motions. This result was later generalized by Lehoczky \cite{L77} to
time-homogeneous diffusion processes. Douady et al. \cite{DSY00} and Magdon et
al. \cite{MAPA04} derived infinite series expansions for the distribution of
$\tau_{a}$ for a standard Brownian motion and a drifted Brownian motion,
respectively. For spectrally negative L\'{e}vy processes, Mijatovic and
Pistorius \cite{MP12} obtained a sextuple formula for the joint Laplace
transform of $\tau_{a}$ and the last reset time of the maximum prior to
$\tau_{a}$, together with the joint distribution of the running maximum, the
running minimum, and the overshoot of $Y$ at $\tau_{a}$. For some studies on
the joint law of drawdown and drawup of spectrally negative L\'{e}vy processes
or diffusion processes, please refer to Pistorius \cite{P04}, Pospisil et al.
\cite{PVH09}, Zhang and Hadjiliadis \cite{ZH10}, and Zhang \cite{Z15}.

As mentioned above, L\'{e}vy processes\footnote{Most often, one-sided L\'{e}vy
processes (an exception to this is Baurdoux \cite{B09} for general L\'{e}vy
processes)} and time-homogeneous diffusion processes are two main classes of
Markov processes for which various drawdown problems have been extensively
studied. The treatment of these two classes of Markov processes has typically
been considered distinctly in the literature. For L\'{e}vy processes,
It\^{o}'s excursion theory is a powerful approach to handle drawdown problems
(e.g., Avram et al. \cite{AKP04}, Pistorius \cite{P04}, and Mijatovic and
Pistorius \cite{MP12}). However, the excursion-theoretic approach is somewhat
specific to the underlying model, and additional care is required when a more
general class of Markov processes is considered. On the other hand, for
time-homogeneous diffusion processes, Lehoczky \cite{L77} introduced an
ingenious approach which has recently been generalized by many researchers
(e.g., Zhou \cite{Z07}, Li et al. \cite{LTZ13}, and Zhang \cite{Z15}). Here
again, Lehoczky's approach relies on the continuity of the sample path of the
underlying model, and hence is not applicable for processes with upward jumps.
Also, other general methodologies (such as the martingale approach in, e.g.,
Asmussen \cite{AAP04} and the occupation density approach in, e.g., Ivanovs
and Palmowski \cite{IP12}) are well documented in the literature but they
strongly depend on the specific structure of the underlying process. To the
best of our knowledge, no unified treatment of drawdowns (drawups) for general
Markov processes has been proposed in the literature.

In this paper, we propose a general and unified approach to study the joint
law of $(\tau_{a},M_{\tau_{a}},Y_{\tau_{a}})$ for time-homogeneous Markov
processes with possibly two-sided jumps. Under mild regularity conditions, the
joint law is expressed as the solution to an integral equation which involves
two-sided exit quantities of the underlying process $X$. The uniqueness of the
integral equation for the joint law is also investigated. In particular, the
joint law possesses explicit forms when $X$ has only one-sided jumps or is a
L\'{e}vy process (possibly with two-sided jumps). In general, our main result
reduces the drawdown problem to fundamental two-sided exit quantities.

The main idea of our proposed approach is briefly summarized below. By
analyzing the evolution of sample paths over a short time period following
time $0$ and using renewal arguments, we first establish tight upper and lower
bounds for the joint law of $(\tau_{a},M_{\tau_{a}},Y_{\tau_{a}})$ in terms of
the two-sided exit quantities. Then, under mild regularity conditions, we use
a Fatou's lemma with varying measures to show that the upper and lower bounds
converge when the length of the time interval approaches $0$. This leads to an
integro-differential equation satisfied by the desired joint law. Finally, we
reduce the integro-differential equation to an integral equation. When $X$ is
a spectrally negative Markov process or a general L\'{e}vy process, the
integral equation can be solved and the joint law of $(\tau_{a},M_{\tau_{a}%
},Y_{\tau_{a}})$ is hence explicitly expressed in terms of two-sided exit quantities.

The rest of the paper is organized as follows. In Section 2, we introduce some
fundamental two-sided exit quantities and present several preliminary results.
In Section 3, we derive the joint law of $(\tau_{a},Y_{\tau_{a}},M_{\tau_{a}%
})$ for general time-homogeneous Markov processes. Several Markov processes
for which the proposed regularity conditions are met are further discussed.
\BIN{Some numerical examples are investigated in more detail in Section 4}.
Some technical proofs are postponed to Appendix.

\section{Preliminary}

For ease of notation, we adopt the following conventions throughout the paper.
We denote by $\mathbb{P}_{x}$ the law of $X$ given $X_{0}=x\in%
%TCIMACRO{\U{211d} }%
%BeginExpansion
\mathbb{R}
%EndExpansion
$ and write $\mathbb{P}\equiv\mathbb{P}_{0}$ for brevity. We write $u\wedge
v=\min\{u,v\}$, $%
%TCIMACRO{\U{211d} }%
%BeginExpansion
\mathbb{R}
%EndExpansion
_{+}=[0,\infty)$, and $\int_{x}^{y}\cdot\mathrm{d}z$ for an integral on the
open interval $z\in(x,y)$.

For $q,s\geq0$, $u\leq x\leq v$ and $z>0$, we introduce the following
two-sided exit quantities of $X$:
\begin{align*}
B_{1}^{(q)}(x;u,v)  &  :=\mathbb{E}_{x}\left[  e^{-qT_{v}^{+}}1_{\left\{T_{v}^{+}<\infty,  
T_{v}^{+}<T_{u}^{-},X_{T_{v}^{+}}=v\right\}  }\right]  ,\\
B_{2}^{(q)}(x,\mathrm{d}z;u,v)  &  :=\mathbb{E}_{x}\left[  e^{-qT_{v}^{+}%
}1_{\left\{T_{v}^{+}<\infty,    T_{v}^{+}<T_{u}^{-},X_{T_{v}^{+}}-v\in\mathrm{d}z\right\}
}\right]  ,\\
C^{(q,s)}(x;u,v)  &  :=\mathbb{E}_{x}\left[  e^{-qT_{u}^{-}-s(u-X_{T_{u}^{-}%
})}1_{\left\{ T_{u}^{-}<\infty,   T_{u}^{-}<T_{v}^{+}\right\}  }\right]  .
\end{align*}
We also define the joint Laplace transform
\begin{equation}
B^{(q,s)}(x;u,v):=\mathbb{E}_{x}\left[  e^{-qT_{v}^{+}-s(X_{T_{v}^{+}}%
-v)}1_{\left\{ T_{v}^{+}<\infty,  T_{v}^{+}<T_{u}^{-}\right\}  }\right]  =B_{1}^{(q)}%
(x;u,v)+B_{2}^{(q,s)}(x;u,v), \label{BBB}%
\end{equation}
where $B_{2}^{(q,s)}(x;u,v):=\int_{0}^{\infty}e^{-sz}B_{2}^{(q)}%
(x,\mathrm{d}z;u,v)$.

The following pathwise inequalities are central to the construction of tight
bounds for the joint law of the triplet $(\tau_{a},M_{\tau_{a}},Y_{\tau_{a}})$.

\begin{proposition}
\label{prop path}For $q,s\geq0$, $x\in\mathbb{R}$ and $\varepsilon\in(0,a)$,
we have $\mathbb{P}_{x}$-a.s.
\begin{equation}
1_{\{T_{x+\varepsilon}^{+}<\infty, T_{x+\varepsilon}^{+}<T_{x+\varepsilon-a}^{-}\}}\leq1_{\{T_{x+\varepsilon}^{+}<\infty, T_{x+\varepsilon
}^{+}<\tau_{a}\}}\leq1_{\{T_{x+\varepsilon}^{+}<\infty, T_{x+\varepsilon}^{+}<T_{x-a}^{-}\}}, \label{eq.up}%
\end{equation}
and
\begin{align}
e^{-q\tau_{a}-s(Y_{\tau_{a}}-a)}1_{\left\{\tau_{a}<\infty,  \tau_{a}<T_{x+\varepsilon}%
^{+}\right\}  }  &  \geq e^{-qT_{x-a}^{-}-s(x-a-X_{T_{x-a}^{-}})-s\varepsilon
}1_{\{T_{x-a}^{-}<\infty, T_{x-a}^{-}<T_{x+\varepsilon}^{+}\}},\label{eq.down1}\\
e^{-q\tau_{a}-s(Y_{\tau_{a}}-a)}1_{\left\{ \tau_{a}<\infty, \tau_{a}<T_{x+\varepsilon}%
^{+}\right\}  }  &  \leq e^{-qT_{x+\varepsilon-a}^{-}%
-s(x-a-X_{T_{x+\varepsilon-a}^{-}})}1_{\{T_{x+\varepsilon-a}^{-}<\infty, T_{x+\varepsilon-a}^{-}%
<T_{x+\varepsilon}^{+}\}}. \label{eq.down2}%
\end{align}

\end{proposition}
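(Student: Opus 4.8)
The plan is to reduce all four inequalities to a single pathwise observation. Under $\mathbb{P}_{x}$ we have $M_{0}=x$ and $M$ is nondecreasing, so before $X$ first exceeds $x+\varepsilon$ the running maximum is confined to the band $M_{t}\in[x,x+\varepsilon]$. Consequently the drawdown barrier $M_{t}-a$ lies in $[x-a,\,x+\varepsilon-a]$, and the event $\{Y_{t}>a\}=\{X_{t}<M_{t}-a\}$ is squeezed between the sufficient condition $\{X_{t}<x-a\}$ and (on the band) the necessary condition $\{X_{t}<x+\varepsilon-a\}$. Recording this as an ordering of passage times, I would first establish, on the relevant events,
\[
T_{x+\varepsilon-a}^{-}\le\tau_{a}\le T_{x-a}^{-},
\]
since $X_{t}<x-a$ together with $M_{t}\ge x$ forces $Y_{t}>a$ (sufficiency, giving $\tau_{a}\le T_{x-a}^{-}$), while $Y_{t}>a$ together with $M_{t}\le x+\varepsilon$ forces $X_{t}<x+\varepsilon-a$ (necessity, giving $T_{x+\varepsilon-a}^{-}\le\tau_{a}$).

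For the indicator inequalities \eqref{eq.up} I would argue both inclusions directly. For the lower bound, on $\{T_{x+\varepsilon}^{+}<T_{x+\varepsilon-a}^{-}\}$ every $t\le T_{x+\varepsilon}^{+}$ satisfies $M_{t}\le x+\varepsilon$ and $X_{t}\ge x+\varepsilon-a$, whence $Y_{t}\le a$; since moreover $Y_{T_{x+\varepsilon}^{+}}=0$, no exceedance of $a$ occurs on $[0,T_{x+\varepsilon}^{+}]$ and therefore $\tau_{a}>T_{x+\varepsilon}^{+}$. For the upper bound I would use the contrapositive: if $T_{x-a}^{-}\le T_{x+\varepsilon}^{+}$, then at $T_{x-a}^{-}$ we have $X_{T_{x-a}^{-}}\le x-a$ and $M_{T_{x-a}^{-}}\ge x$, so $Y\ge a$ there and hence $\tau_{a}\le T_{x-a}^{-}\le T_{x+\varepsilon}^{+}$.

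For the overshoot inequalities \eqref{eq.down1}--\eqref{eq.down2} I would first check indicator compatibility (whenever the dominating indicator equals $1$, so does the dominated one, via the orderings above), and then compare exponents. The $q$-exponents are handled by the time orderings $T_{x+\varepsilon-a}^{-}\le\tau_{a}\le T_{x-a}^{-}$ together with $q\ge0$. The $s$-exponents reduce, after taking logarithms, to the two-sided estimate
\[
x-X_{T_{x+\varepsilon-a}^{-}}\;\le\;Y_{\tau_{a}}\;\le\;x+\varepsilon-X_{T_{x-a}^{-}},
\]
which I would prove by casework on whether $\tau_{a}$ coincides with the one-sided passage time. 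Using $x\le M_{\tau_{a}}\le x+\varepsilon$ on $\{\tau_{a}<T_{x+\varepsilon}^{+}\}$ and the fact that the process stays on the correct side of a level before the corresponding passage time, the cases $T_{x+\varepsilon-a}^{-}<\tau_{a}$ and $\tau_{a}<T_{x-a}^{-}$ give $Y_{\tau_{a}}>a\ge x-X_{T_{x+\varepsilon-a}^{-}}$ and $Y_{\tau_{a}}\le a+\varepsilon\le x+\varepsilon-X_{T_{x-a}^{-}}$ respectively, while the coincidence cases follow at once from $Y_{\tau_{a}}=M_{\tau_{a}}-X_{\tau_{a}}$ and $M_{\tau_{a}}\in[x,x+\varepsilon]$.

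The main obstacle I anticipate is the careful bookkeeping of strict versus non-strict crossings, i.e. the creeping case in which $X$ reaches a barrier continuously so that $Y_{\tau_{a}}=a$ exactly or $\tau_{a}$ coincides with a one-sided passage time. I would dispose of these degenerate configurations by invoking the right-continuity of $Y$ and the definition of $\tau_{a}$ as the first \emph{strict} exceedance of $a$, checking that the moving barrier $M_{t}-a$ is consistent with the infima defining $T_{x-a}^{-}$ and $T_{x+\varepsilon-a}^{-}$ along each path. Everything else is elementary pathwise algebra driven entirely by the confinement $M_{t}\in[x,x+\varepsilon]$.
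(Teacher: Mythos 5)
Your proposal is correct and takes essentially the same approach as the paper: a pathwise analysis driven by the confinement $M_t\in[x,x+\varepsilon]$ before $T_{x+\varepsilon}^{+}$, the resulting orderings $T_{x+\varepsilon-a}^{-}\leq\tau_{a}\leq T_{x-a}^{-}$, and the two-sided bounds $x-X_{T_{x+\varepsilon-a}^{-}}\leq Y_{\tau_{a}}\leq x+\varepsilon-X_{T_{x-a}^{-}}$, with the creeping/strict-inequality subtleties (e.g.\ that $Y_{T_{x-a}^{-}}\geq a$ alone does not trigger $\tau_{a}$, so one must use the infimum definition of $T_{x-a}^{-}$ together with right-continuity) correctly identified and assigned the right tools. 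One small slip: indicator compatibility must go the other way around --- the \emph{dominated} side's event must be contained in the \emph{dominating} side's event, e.g.\ $\{T_{x-a}^{-}<\infty,\ T_{x-a}^{-}<T_{x+\varepsilon}^{+}\}\subset\{\tau_{a}<\infty,\ \tau_{a}<T_{x+\varepsilon}^{+}\}$ for the first exponential inequality and $\{\tau_{a}<\infty,\ \tau_{a}<T_{x+\varepsilon}^{+}\}\subset\{T_{x+\varepsilon-a}^{-}<\infty,\ T_{x+\varepsilon-a}^{-}<T_{x+\varepsilon}^{+}\}$ for the second --- and these are exactly the inclusions your orderings deliver, so the error is purely in the wording, not in the mathematics.
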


\begin{proof}
By analyzing the sample paths of $X$, it is easy to see that, for any path $\omega\in(T_{x+\varepsilon}^{+}<\infty)$, we have $\pr_x\{\tau_a\le T_{x-a}^-\}=1$, so
\begin{equation}
(T_{x+\varepsilon}^{+}<\infty, T_{x+\varepsilon}^{+}<\tau_{a})=(T_{x+\varepsilon}^{+}<\infty, T_{x+\varepsilon
}^{+}<\tau_a\le T_{x-a}^{-})\subset(T_{x+\varepsilon}^{+}<\infty,  T_{x+\varepsilon}^{+}<T_{x-a}^-)\quad\mathbb{P}_{x}\text{-a.s.}\nonumber
\label{w1}%
\end{equation}
and similarly, $\pr_x$-a.s.%
\begin{equation}
(T_{x+\varepsilon}^{+}<\infty, T_{x+\varepsilon}^{+}<T_{x+\varepsilon-a}^-)=(T_{x+\varepsilon}^{+}<\infty, T_{x+\varepsilon
}^{+}< T_{x+\varepsilon-a}^{-}, T_{x+\varepsilon
}^{+}< \tau_a)\subset(T_{x+\varepsilon}^{+}<\infty,  T_{x+\varepsilon}^{+}<\tau_a),\nonumber%
\end{equation}
which immediately implies (\ref{eq.up}). On the other hand, by using the same argument, we have
\begin{equation}
(T_{x-a}^{-}<\infty, T_{x-a}^{-}<T_{x+\varepsilon}^{+})=(T_{x-a}^{-}<\infty, \tau_{a}\leq T_{x-a}^{-}<T_{x+\varepsilon
}^{+})\subset(\tau_{a}<\infty, \tau_{a}<T_{x+\varepsilon}^{+})\quad\mathbb{P}_{x}\text{-a.s.}
\label{w1}%
\end{equation}
and%
\begin{equation}
(\tau_{a}<\infty, \tau_{a}<T_{x+\varepsilon}^{+})=(\tau_{a}<\infty, T_{x+\varepsilon-a}^{-}\leq\tau
_{a}<T_{x+\varepsilon}^{+})\subset(T_{x+\varepsilon-a}^{-}<\infty, T_{x+\varepsilon-a}^{-}<T_{x+\varepsilon
}^{+})\quad\mathbb{P}_{x}\text{-a.s.} \label{w2}%
\end{equation}
For any path $\omega\in(T_{x-a}^{-}<\infty, T_{x-a}^{-}<T_{x+\varepsilon}^{+})$, we know
from (\ref{w1}) that $\omega\in(T_{x-a}^{-}<\infty, \tau_{a}\leq T_{x-a}^{-}<T_{x+\varepsilon}%
^{+})$. This implies $M_{\tau_{a}}(\omega)\leq x+\varepsilon$ and $X_{\tau
_{a}}(\omega)\geq X_{T_{x-a}^{-}}(\omega)$, which further entails that
$Y_{\tau_{a}}(\omega)=M_{\tau_{a}}(\omega)-X_{\tau_{a}}(\omega)\leq
x+\varepsilon-X_{T_{x-a}^{-}}(\omega)$. Therefore, by the above analysis and
the second inequality of (\ref{eq.up}),
\[
e^{-qT_{x-a}^{-}-s(x+\varepsilon-X_{T_{x-a}^{-}})}1_{\left\{ T_{x-a}^{-}<\infty, T_{x-a}%
^{-}<T_{x+\varepsilon}^{+}\right\}  }\leq e^{-q\tau_{a}-sY_{\tau_{a}}%
}1_{\left\{\tau_{a}<\infty,  \tau_{a}<T_{x+\varepsilon}^{+}\right\}  }\quad\mathbb{P}%
_{x}\text{-a.s.}%
\]
which naturally leads to (\ref{eq.down1}).

Similarly, for any sample path $\omega\in(\tau_{a}<\infty, \tau_{a}<T_{x+\varepsilon}^{+})$,
we know from (\ref{w2}) that $\omega\in(\tau_{a}<\infty, T_{x+\varepsilon-a}^{-}\leq\tau
_{a}<T_{x+\varepsilon}^{+})$, which implies that $x-X_{T_{x+\varepsilon
-a}^{-}}(\omega)\leq Y_{T_{x+\varepsilon-a}^{-}}(\omega)\leq Y_{\tau_{a}%
}(\omega).$ Therefore, by the first inequality of (\ref{eq.up}), we obtain
\[
e^{-q\tau_{a}-sY_{\tau_{a}}}1_{\left\{ \tau_{a}<\infty, \tau_{a}<T_{x+\varepsilon}%
^{+}\right\}  }\leq e^{-qT_{x+\varepsilon-a}^{-}-s(x-X_{T_{x+\varepsilon
-a}^{-}})}1_{\{T_{x+\varepsilon-a}^{-}<\infty, T_{x+\varepsilon-a}^{-}<T_{x+\varepsilon}^{+}\}}\quad
\mathbb{P}_{x}\text{-a.s.}%
\]
This implies the second inequality of (\ref{eq.down2}).\bigskip
\end{proof}

By Proposition \ref{prop path}, we easily obtain the following useful estimates.

\begin{corollary}
\label{cor bd}For $q,s\geq0$, $x\in%
%TCIMACRO{\U{211d} }%
%BeginExpansion
\mathbb{R}
%EndExpansion
,z>0$ and $\varepsilon\in(0,a)$,%
\begin{align*}
B_{1}^{(q)}(x;x+\varepsilon-a,x+\varepsilon)  &  \leq\mathbb{E}_{x}\left[
e^{-qT_{x+\varepsilon}^{+}}1_{\{T_{x+\varepsilon}^{+}<\infty, T_{x+\varepsilon}^{+}<\tau_{a}%
,X_{T_{x+\varepsilon}^{+}}=x+\varepsilon\}}\right]  \leq B_{1}^{(q)}%
(x;x-a,x+\varepsilon),\\
B_{2}^{(q)}(x,\mathrm{d}z;x+\varepsilon-a,x+\varepsilon)  &  \leq
\mathbb{E}_{x}\left[  e^{-qT_{x+\varepsilon}^{+}}1_{\{T_{x+\varepsilon}^{+}<\infty, T_{x+\varepsilon}%
^{+}<\tau_{a},X_{T_{x+\varepsilon}^{+}}-x-\varepsilon\in\mathrm{d}z\}}\right]
\leq B_{2}^{(q)}(x,\mathrm{d}z;x-a,x+\varepsilon),
\end{align*}
and%
\[
e^{-s\varepsilon}C^{(q,s)}(x;x-a,x+\varepsilon)\leq\mathbb{E}_{x}\left[
e^{-q\tau_{a}-s(Y_{\tau_{a}}-a)}1_{\left\{\tau_{a}<\infty,  \tau_{a}<T_{x+\varepsilon}%
^{+}\right\}  }\right]  \leq e^{s\varepsilon}C^{(q,s)}(x;x+\varepsilon
-a,x+\varepsilon).
\]

\end{corollary}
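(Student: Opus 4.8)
The plan is to derive all three pairs of bounds simply by taking $\mathbb{E}_x$-expectations of the corresponding pathwise inequalities in Proposition \ref{prop path}, and then reading off the results against the definitions of $B_1^{(q)}$, $B_2^{(q)}$, and $C^{(q,s)}$ from Section 2. Since each pathwise inequality holds $\mathbb{P}_x$-a.s.\ and every multiplicative factor involved is nonnegative, monotonicity of expectation transfers the ordering verbatim.

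For the first chain, I would multiply inequality (\ref{eq.up}) by the nonnegative factor $e^{-qT_{x+\varepsilon}^{+}}1_{\{X_{T_{x+\varepsilon}^{+}}=x+\varepsilon\}}$ and take $\mathbb{E}_x$. The lower end then equals $\mathbb{E}_x[e^{-qT_{x+\varepsilon}^{+}}1_{\{T_{x+\varepsilon}^{+}<\infty,\,T_{x+\varepsilon}^{+}<T_{x+\varepsilon-a}^{-},\,X_{T_{x+\varepsilon}^{+}}=x+\varepsilon\}}]$, which is exactly $B_1^{(q)}(x;x+\varepsilon-a,x+\varepsilon)$ upon setting $u=x+\varepsilon-a$ and $v=x+\varepsilon$; the upper end becomes $B_1^{(q)}(x;x-a,x+\varepsilon)$ under the same identification with $u=x-a$. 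The $B_2^{(q)}$ chain is identical, except that the refining indicator $1_{\{X_{T_{x+\varepsilon}^{+}}=x+\varepsilon\}}$ is replaced by $1_{\{X_{T_{x+\varepsilon}^{+}}-x-\varepsilon\in\mathrm{d}z\}}$, producing the measure-valued bounds $B_2^{(q)}(x,\mathrm{d}z;\cdot,x+\varepsilon)$.

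For the last bound I would take expectations in (\ref{eq.down1}) and (\ref{eq.down2}), whose left-hand sides both coincide with the target quantity $\mathbb{E}_x[e^{-q\tau_a-s(Y_{\tau_a}-a)}1_{\{\tau_a<\infty,\,\tau_a<T_{x+\varepsilon}^{+}\}}]$. The only step requiring genuine care is the bookkeeping of the exponent: the definition of $C^{(q,s)}(x;u,v)$ contains $s(u-X_{T_u^{-}})$ evaluated at the exit level $u$ actually crossed, whereas both (\ref{eq.down1}) and (\ref{eq.down2}) carry the fixed reference level $x-a$ in their exponents. For the lower bound the level $u=x-a$ already matches, so the explicit factor $e^{-s\varepsilon}$ present in (\ref{eq.down1}) yields $e^{-s\varepsilon}C^{(q,s)}(x;x-a,x+\varepsilon)$ after integration. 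For the upper bound the down-crossing level is $u=x+\varepsilon-a$, so I would rewrite $x-a-X_{T_{x+\varepsilon-a}^{-}}=(x+\varepsilon-a-X_{T_{x+\varepsilon-a}^{-}})-\varepsilon$; this converts $e^{-s(x-a-X_{T_{x+\varepsilon-a}^{-}})}$ into $e^{s\varepsilon}\,e^{-s(x+\varepsilon-a-X_{T_{x+\varepsilon-a}^{-}})}$, and the expectation becomes $e^{s\varepsilon}C^{(q,s)}(x;x+\varepsilon-a,x+\varepsilon)$.

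The main obstacle is therefore not analytical but purely notational: one must reconcile the fixed reference level $x-a$ hidden inside the drawdown overshoot $Y_{\tau_a}-a$ with the running exit level $u$ built into $C^{(q,s)}$, and it is precisely this mismatch of size $\varepsilon$ in the spatial argument that generates the compensating factors $e^{\pm s\varepsilon}$.
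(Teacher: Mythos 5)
Your proposal is correct and is exactly the route the paper intends: the corollary is stated as an immediate consequence of Proposition \ref{prop path}, obtained by multiplying the pathwise inequalities by the appropriate nonnegative factors and taking $\mathbb{E}_x$-expectations. Your careful bookkeeping of the exponent shift $x-a-X_{T_{x+\varepsilon-a}^{-}}=(x+\varepsilon-a-X_{T_{x+\varepsilon-a}^{-}})-\varepsilon$, which produces the compensating factors $e^{\pm s\varepsilon}$, is precisely the step the paper leaves implicit.
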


\begin{remark}
\normalfont It is not difficult to check that the results of Proposition
\ref{prop path} and Corollary \ref{cor bd} still hold if the first passage
times and the drawdown times are only observed discretely or randomly (such as
the Poisson observation framework in Albrecher et al. \cite{AIZ16} for the
latter). Further, explicit relationship between Poisson observed first passage
times and Poisson observed drawdown times (similar as for Theorem
\ref{thm markov} below) can be found by exploiting the same approach as laid
out in this paper.
\end{remark}

The later analysis involves the weak convergence of measures which is recalled
here. Consider a metric space $S$ with the Borel $\sigma$-algebra on it. We
say a sequence of finite measures $\{\mu_{n}\}_{n\in%
%TCIMACRO{\U{2115} }%
%BeginExpansion
\mathbb{N}
%EndExpansion
}$ is weakly convergent to a finite measure $\mu$ as $n\rightarrow\infty$ if
\[
\lim_{n\rightarrow\infty}\int_{S}\phi(z)\mathrm{d}\mu_{n}(z)=\int_{S}%
\phi(z)\mathrm{d}\mu(z),
\]
for any bounded and continuous function $\phi(\cdot)$ on $S$.

In the next lemma, we show some forms of Fatou's lemma for varying measures
under weak convergence. Similar results are proved in Feinberg et al.
\cite{FKZ14} for probability measures. For completeness, a proof for general
finite measures is provided in Appendix.

\begin{lemma}
\label{lem fatou}Suppose that $\{\mu_{n}\}_{n\in%
%TCIMACRO{\U{2115} }%
%BeginExpansion
\mathbb{N}
%EndExpansion
}$ is a sequence of finite measures on $S$ which is weakly convergent to a
finite measure $\mu$, and $\{\phi_{n}\}_{n\in%
%TCIMACRO{\U{2115} }%
%BeginExpansion
\mathbb{N}
%EndExpansion
}$ is a sequence of uniformly bounded and nonnegative functions on $S$. Then,%
\begin{equation}
\int_{S}\liminf_{n\rightarrow\infty,w\rightarrow z}\phi_{n}(w)\mathrm{d}%
\mu(z)\leq\liminf_{n\rightarrow\infty}\int_{S}\phi_{n}(z)\mathrm{d}\mu
_{n}(z)\text{,} \label{inf}%
\end{equation}
and
\begin{equation}
\int_{S}\limsup_{n\rightarrow\infty,w\rightarrow z}\phi_{n}(w)\mathrm{d}%
\mu(z)\geq\limsup_{n\rightarrow\infty}\int_{S}\phi_{n}(z)\mathrm{d}\mu_{n}(z).
\label{sup}%
\end{equation}

\end{lemma}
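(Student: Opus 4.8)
The plan is to prove the two inequalities \eqref{inf} and \eqref{sup} by reducing each to the standard Portmanteau/Fatou machinery for weak convergence. I would first observe that \eqref{sup} follows from \eqref{inf} applied to the complementary functions: since the $\phi_n$ are uniformly bounded by some constant $K$, the functions $K-\phi_n$ are again uniformly bounded and nonnegative, and substituting them into \eqref{inf} and using $\mu_n(S)\to\mu(S)$ (which holds because $\mu_n\Rightarrow\mu$ weakly and the constant function $1$ is bounded continuous) converts the $\liminf$ statement into the $\limsup$ statement. So the crux is to establish \eqref{inf}.

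\medskip

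For \eqref{inf}, the key device is to define the lower-envelope function
\[
\psi(z):=\liminf_{n\to\infty,\,w\to z}\phi_n(w),
\]
which is lower semicontinuous by construction, and then to approximate it from below by bounded continuous functions. More precisely, I would fix any bounded continuous $g$ on $S$ with $0\le g\le\psi$ (equivalently, build an increasing sequence of such $g$'s converging pointwise up to $\psi$, exploiting lower semicontinuity of $\psi$ together with the fact that any nonnegative lower semicontinuous function on a metric space is the increasing pointwise limit of bounded Lipschitz functions). For each fixed such $g$, I claim
\[
\liminf_{n\to\infty}\int_S\phi_n\,\mathrm{d}\mu_n\ \ge\ \int_S g\,\mathrm{d}\mu.
\]
Granting the claim, taking the supremum over all admissible $g$ and invoking the monotone convergence theorem on the right-hand side yields $\int_S\psi\,\mathrm{d}\mu\le\liminf_n\int_S\phi_n\,\mathrm{d}\mu_n$, which is exactly \eqref{inf}.

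\medskip

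To prove the claim, I would exploit the joint $\liminf$ over both $n\to\infty$ and $w\to z$. Because $g(z)\le\psi(z)=\liminf_{n,w}\phi_n(w)$, for each $z$ and each $\delta>0$ there is a neighborhood of $z$ and a tail of indices on which $\phi_n(w)>g(z)-\delta$; the continuity of $g$ then lets one arrange, for large $n$, the pointwise comparison $\phi_n(w)\ge g(w)-\delta$ to hold on a set that eventually exhausts $S$. The careful way to package this is to show $\liminf_{n}\inf_{w\in B(z,r)}\phi_n(w)\ge g(z)$ for small $r$ and then use a covering/lower-semicontinuity argument. Combining such a pointwise-in-the-limit lower bound with the weak convergence $\int_S g\,\mathrm{d}\mu_n\to\int_S g\,\mathrm{d}\mu$ (valid since $g$ is bounded continuous) gives the claim.

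\medskip

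I expect the main obstacle to be precisely this last step: the bound is not a pointwise inequality $\phi_n\ge g$ for each fixed $n$, but only an asymptotic one that mixes the index limit with the spatial limit, so one cannot simply write $\int\phi_n\,\mathrm{d}\mu_n\ge\int g\,\mathrm{d}\mu_n$ and pass to the limit. The honest argument needs a uniform lower-semicontinuous truncation: replace $g$ by $g_r(z):=\inf_{w\in B(z,r)}g(w)$ and simultaneously control the $n$-tail, or invoke a Skorokhod-type representation as in Feinberg et al.\ \cite{FKZ14}. Managing the interchange of the double limit with integration against the varying measures $\mu_n$—rather than a fixed $\mu$—is the delicate point that distinguishes this from the classical Fatou lemma, and it is where the uniform boundedness hypothesis on $\{\phi_n\}$ does the essential work.
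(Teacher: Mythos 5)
Your reduction of \eqref{sup} to \eqref{inf} via $K-\phi_n$ and the convergence $\mu_n(S)\to\mu(S)$ is exactly the paper's argument, and your overall strategy for \eqref{inf} (lower semicontinuous envelope, Portmanteau for a fixed nice function, monotone convergence at the end) is in the right spirit. However, there is a genuine gap at what you yourself call the crux: the claim that $\liminf_n\int_S\phi_n\,\mathrm{d}\mu_n\ge\int_S g\,\mathrm{d}\mu$ for a bounded continuous $g\le\psi$ is asserted but never actually proved. The sketch you give --- a neighborhood of $z$ and a tail of indices on which $\phi_n(w)>g(z)-\delta$, then a ``covering/lower-semicontinuity argument'' --- does not close, because both the neighborhood and the tail depend on $z$ with no uniformity, $S$ is not assumed compact (nor the $\mu_n$ tight), and mass of $\mu_n$ can concentrate precisely where the comparison $\phi_n\ge g-\delta$ has not yet kicked in. Since $g\le\psi$ is only an asymptotic bound, there is no $n$ for which $\phi_n\ge g-\delta$ holds $\mu_n$-a.e., so one cannot pass from $\int g\,\mathrm{d}\mu_n\to\int g\,\mathrm{d}\mu$ to the claim. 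You flag this obstacle honestly and gesture at fixes (truncation, Skorokhod representation) without carrying any of them out, so the proposal is incomplete exactly where the lemma is nontrivial.

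The device that closes the gap --- and is the paper's proof --- is to make the approximating functions \emph{monotone in $n$} rather than fixed: set $\psi_n(z)=\inf_{m\ge n}\phi_m(z)$ and $\underline{\psi}_n(z)=\liminf_{w\to z}\psi_n(w)$. Each $\underline{\psi}_n$ is nonnegative, lower semicontinuous, increasing in $n$, with $\underline{\psi}_n\uparrow\psi$ pointwise and, crucially, $\underline{\psi}_n\le\phi_m$ for every $m\ge n$. Portmanteau applied to the \emph{fixed} function $\underline{\psi}_n$ gives $\int_S\underline{\psi}_n\,\mathrm{d}\mu\le\liminf_m\int_S\underline{\psi}_n\,\mathrm{d}\mu_m$, monotonicity upgrades this to $\liminf_m\int_S\underline{\psi}_m\,\mathrm{d}\mu_m\le\liminf_m\int_S\phi_m\,\mathrm{d}\mu_m$, and monotone convergence in $n$ recovers $\int_S\psi\,\mathrm{d}\mu$ on the left. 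Incidentally, your claim is in fact true and could be verified with these same tools (approximate $g$ by $g\wedge\underline{\psi}_n$, which is lower semicontinuous and dominated by $\phi_m$ for $m\ge n$), but that route passes through the tail-infimum construction anyway, so the continuous-approximation layer you add is superfluous rather than an alternative proof.
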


\section{Main results}

In this section, we study the joint law of $(\tau_{a},M_{\tau_{a}},Y_{\tau
_{a}})$ for a general Markov process with possibly two-sided jumps. The
following assumptions on the two-sided exit quantities of $X$ are assumed to
hold, which are sufficient (but not necessary) conditions for the
applicability of our proposed methodology. Weaker assumptions might be assumed
for special Markov processes; see, for instance, Remark \ref{rk levy} and
Corollary \ref{cor snm} below.

\begin{condition}
For all $q,s\geq0$, $z>0$ and $x>X_{0}$, we assume the following limits exist
and identities hold:
\begin{align*}
\text{\textbf{(A1)} }b_{a,1}^{(q)}(x)  &  :=\lim_{\varepsilon\downarrow0}%
\frac{1-B_{1}^{(q)}(x;x-a,x+\varepsilon)}{\varepsilon}=\lim_{\varepsilon
\downarrow0}\frac{1-B_{1}^{(q)}(x;x+\varepsilon-a,x+\varepsilon)}{\varepsilon
}\\
&  =\lim_{\varepsilon\downarrow0}\frac{1-B_{1}^{(q)}(x-\varepsilon
;x-a,x)}{\varepsilon}=\lim_{\varepsilon\downarrow0}\frac{1-B_{1}%
^{(q)}(x-\varepsilon;x-\varepsilon-a,x)}{\varepsilon},
\end{align*}
and $\int_{x}^{y}b_{a,1}^{(q)}(w)\mathrm{d}w<\infty$ for any $x,y\in%
%TCIMACRO{\U{211d} }%
%BeginExpansion
\mathbb{R}
%EndExpansion
$;%
\begin{align*}
\text{\textbf{(A2)} }b_{a,2}^{(q,s)}(x)  &  :=\lim_{\varepsilon\downarrow
0}\frac{1}{\varepsilon}B_{2}^{(q,s)}(x;x-a,x+\varepsilon)=\lim_{\varepsilon
\downarrow0}\frac{1}{\varepsilon}B_{2}^{(q,s)}(x;x+\varepsilon-a,x+\varepsilon
)\\
&  =\lim_{\varepsilon\downarrow0}\frac{1}{\varepsilon}B_{2}^{(q,s)}%
(x-\varepsilon;x-a,x)=\lim_{\varepsilon\downarrow0}\frac{1}{\varepsilon}%
B_{2}^{(q,s)}(x-\varepsilon;x-\varepsilon-a,x),
\end{align*}
and $s\longmapsto b_{a,2}^{(q,s)}(x)$ is right continuous at $s=0$;%
\begin{align*}
\text{\textbf{(A3)} }c_{a}^{(q,s)}(x)  &  :=\lim_{\varepsilon\downarrow0}%
\frac{C^{(q,s)}(x;x-a,x+\varepsilon)}{\varepsilon}=\lim_{\varepsilon
\downarrow0}\frac{C^{(q,s)}(x;x+\varepsilon-a,x+\varepsilon)}{\varepsilon}\\
&  =\lim_{\varepsilon\downarrow0}\frac{C^{(q,s)}(x-\varepsilon;x-a,x)}%
{\varepsilon}=\lim_{\varepsilon\downarrow0}\frac{C^{(q,s)}(x-\varepsilon
;x-\varepsilon-a,x)}{\varepsilon}.
\end{align*}

\end{condition}

Under Assumptions (\textbf{A1}) and (\textbf{A2}), it follows from (\ref{BBB})
that
\begin{equation}
b_{a}^{(q,s)}(x):=\lim_{\varepsilon\downarrow0}\frac{1-B^{(q,s)}%
(x;x-a,x+\varepsilon)}{\varepsilon}=b_{a,1}^{(q)}(x)-b_{a,2}^{(q,s)}(x).
\label{bbb}%
\end{equation}

\begin{remark}
\label{rmk31} \normalfont Due to the general structure of $X$, it is difficult
to refine Assumptions \textbf{(A1)}-\textbf{(A3)} unless a specific structure
for $X$ is given. \BIN{A necessary condition for
Assumptions \textbf{(A1)}-\textbf{(A3}) to hold is that,
\[
T_{x}^{+}=0\text{ and }X_{T_{x}^{+}}=x,\text{ }\mathbb{P}_{x}\text{-a.s. for
all }x\in\mathbb{R}\text{.}\]
In other words, $X$ must be upward regular and creeping upward at every
$x$.}\footnote{See page 142 and page 197 of \cite{K14} for definitions of
regularity and creeping for L\'{e}vy processes.} In the later part of this
section, we provide some examples of Markov processes which satisfy
Assumptions \textbf{(A1)}-\textbf{(A3)}, including spectrally negative
L\'{e}vy processes, linear diffusions, piecewise exponential Markov processes,
and jump diffusions.
\end{remark}

\begin{remark}
\normalfont\label{rk weak}By Theorem 5.22 of Kallenberg \cite{K02} or
Proposition 7.1 of Landriault et al. \cite{LLZ16}, we know that Assumption
(\textbf{A2}) implies that the measures $\frac{1}{\varepsilon}B_{2}%
^{(q)}(x,\mathrm{d}z;x-a,x+\varepsilon)$, $\frac{1}{\varepsilon}B_{2}%
^{(q)}(x,\mathrm{d}z;x+\varepsilon-a,x+\varepsilon)$, $\frac{1}{\varepsilon
}B_{2}^{(q)}(x-\varepsilon,\mathrm{d}z;x-a,x)$ and $\frac{1}{\varepsilon}%
B_{2}^{(q)}(x-\varepsilon,\mathrm{d}z;x-\varepsilon-a,x)$ weakly converge to
the same measure on $%
%TCIMACRO{\U{211d} }%
%BeginExpansion
\mathbb{R}
%EndExpansion
_{+}$, denoted as $b_{a,2}^{(q)}(x,\mathrm{d}z)$, such that $\int_{%
%TCIMACRO{\U{211d} }%
%BeginExpansion
\mathbb{R}
%EndExpansion
_{+}}e^{-sz}b_{a,2}^{(q)}(x,\mathrm{d}z)=b_{a,2}^{(q,s)}(x)$. We point out
that it is possible that $b_{a,2}^{(q)}(x,\{0\})>0$, though the measure
$B_{2}^{(q)}(x,\mathrm{d}z;u,v)$ is only defined on $z\in(0,\infty)$.
\end{remark}

We are now ready to present the main result of this paper related to the joint
law of $(\tau_{a},Y_{\tau_{a}},M_{\tau_{a}})$.

\begin{theorem}
\label{thm markov}Consider a general time-homogeneous Markov process $X$
satisfying Assumptions (\textbf{A1})-(\textbf{A3}). For $q,s\geq0$ and
$K\in\mathbb{R}$, let
\[
h(x)=\mathbb{E}_{x}\left[  e^{-q\tau_{a}-s(Y_{\tau_{a}}-a)}1_{\{\tau_a<\infty, M_{\tau_{a}%
}\leq K\}}\right]  ,\quad x\leq K.
\]
Then $h(\cdot)$ is differentiable in $x<K$ and solves the following integral
equation
\begin{equation}
h(x)=\int_{x}^{K}e^{-\int_{x}^{y}b_{a,1}^{(q)}(w)\mathrm{d}w}\left(
c_{a}^{(q,s)}(y)+\int_{[0,K-y)}h(y+z)b_{a,2}^{(q)}(y,\mathrm{d}z)\right)
\mathrm{d}y\text{,}\quad x\leq K. \label{triple LT}%
\end{equation}

\end{theorem}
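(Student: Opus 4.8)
The plan is to set up a renewal equation for $h$ by conditioning on the first passage time $T_{x+\varepsilon}^{+}$, then let $\varepsilon\downarrow0$ to convert it into a linear integro-differential equation, and finally integrate that equation. First I would condition on whether $\tau_a$ or $T_{x+\varepsilon}^{+}$ occurs first. Since $M_0=X_0=x$, on $\{\tau_a<T_{x+\varepsilon}^{+}\}$ one has $M_{\tau_a}\le x+\varepsilon$, so for every $\varepsilon\le K-x$ the constraint $\{M_{\tau_a}\le K\}$ is automatic and this contribution equals the central quantity sandwiched in Corollary \ref{cor bd}. On $\{T_{x+\varepsilon}^{+}<\tau_a\}$ the drawdown resets to $0$ and the running maximum becomes $X_{T_{x+\varepsilon}^{+}}\ge x+\varepsilon$ at time $T_{x+\varepsilon}^{+}$; the strong Markov property then factorizes the post-$T_{x+\varepsilon}^{+}$ contribution as $\mathbb{E}_x[e^{-qT_{x+\varepsilon}^{+}}1_{\{T_{x+\varepsilon}^{+}<\tau_a\}}h(X_{T_{x+\varepsilon}^{+}})]$, where $h$ is extended by $0$ on $(K,\infty)$ (legitimate because $X_{T_{x+\varepsilon}^{+}}>K$ forces $M_{\tau_a}>K$). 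Splitting this last term according to whether $X$ creeps over $x+\varepsilon$ or jumps over it yields
\[ h(x)=A_\varepsilon(x)+h(x+\varepsilon)\,P_\varepsilon^{\mathrm{c}}(x)+\int_{(0,\infty)}h(x+\varepsilon+z)\,P_\varepsilon^{\mathrm{j}}(x,\mathrm{d}z), \]
where $A_\varepsilon(x),P_\varepsilon^{\mathrm c}(x),P_\varepsilon^{\mathrm j}(x,\mathrm{d}z)$ are exactly the three quantities bounded in Corollary \ref{cor bd}.

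Note first that $0\le h\le1$ since $Y_{\tau_a}\ge a$ and $q,s\ge0$. Before passing to the limit I would establish that $h$ is continuous: the coarse estimates $A_\varepsilon(x)=O(\varepsilon)$, $\int_{(0,\infty)}P_\varepsilon^{\mathrm j}(x,\mathrm{d}z)=O(\varepsilon)$ and $1-P_\varepsilon^{\mathrm c}(x)=O(\varepsilon)$ (all immediate from Corollary \ref{cor bd} and Assumptions \textbf{(A1)}--\textbf{(A3)}) feed back into the renewal equation to give $|h(x+\varepsilon)-h(x)|=O(\varepsilon)$, so $h$ is locally Lipschitz on $(-\infty,K]$. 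Rewriting the renewal equation as $\frac{h(x)-h(x+\varepsilon)}{\varepsilon}=h(x+\varepsilon)\frac{P_\varepsilon^{\mathrm c}(x)-1}{\varepsilon}+\frac{A_\varepsilon(x)}{\varepsilon}+\frac1\varepsilon\int_{(0,\infty)}h(x+\varepsilon+z)P_\varepsilon^{\mathrm j}(x,\mathrm{d}z)$ and using the two-sided squeeze of Corollary \ref{cor bd}, Assumption \textbf{(A1)} gives $\frac{1-P_\varepsilon^{\mathrm c}(x)}{\varepsilon}\to b_{a,1}^{(q)}(x)$ and Assumption \textbf{(A3)} gives $\frac{A_\varepsilon(x)}{\varepsilon}\to c_a^{(q,s)}(x)$.

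The delicate term is the jump integral, where both the measure and the integrand vary with $\varepsilon$. Here $\frac1\varepsilon P_\varepsilon^{\mathrm j}(x,\mathrm{d}z)$ is squeezed between $\frac1\varepsilon B_2^{(q)}(x,\mathrm{d}z;x+\varepsilon-a,x+\varepsilon)$ and $\frac1\varepsilon B_2^{(q)}(x,\mathrm{d}z;x-a,x+\varepsilon)$, which by Remark \ref{rk weak} both converge weakly to $b_{a,2}^{(q)}(x,\mathrm{d}z)$. Applying Lemma \ref{lem fatou} to the uniformly bounded integrands $\phi_\varepsilon(z)=h(x+\varepsilon+z)$, and using the continuity of $h$ to identify the $\liminf$ and $\limsup$ along $\varepsilon\downarrow0,\,w\to z$ both with $h(x+z)$, traps the limit at $\int_{[0,\infty)}h(x+z)\,b_{a,2}^{(q)}(x,\mathrm{d}z)$. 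Since $h(K)=0$ (see below) and $h\equiv0$ beyond $K$, this equals $\int_{[0,K-x)}h(x+z)\,b_{a,2}^{(q)}(x,\mathrm{d}z)$. Combining the three limits identifies the right derivative $h'_{+}(x)=b_{a,1}^{(q)}(x)h(x)-c_a^{(q,s)}(x)-\int_{[0,K-x)}h(x+z)b_{a,2}^{(q)}(x,\mathrm{d}z)$. I expect this step---reconciling the simultaneously varying measure and integrand, with attention to the possible atom $b_{a,2}^{(q)}(x,\{0\})$ and to the boundary point $z=K-x$---to be the main obstacle.

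To turn the right derivative into a genuine derivative I would rerun the analysis with left increments, applying the renewal decomposition from $x-\varepsilon$ up to level $x$; this is precisely the role of the last two (base-point $x-\varepsilon$) forms in each of \textbf{(A1)}--\textbf{(A3)}, which deliver the same limits and hence $h'_{-}(x)=h'_{+}(x)$. Setting $g(y)=c_a^{(q,s)}(y)+\int_{[0,K-y)}h(y+z)b_{a,2}^{(q)}(y,\mathrm{d}z)$, this gives the linear ODE $h'(y)=b_{a,1}^{(q)}(y)h(y)-g(y)$ on $(-\infty,K)$. Finally I would multiply by the integrating factor $e^{-\int_x^y b_{a,1}^{(q)}(w)\mathrm{d}w}$, so that $\frac{\mathrm{d}}{\mathrm{d}y}\big(h(y)e^{-\int_x^y b_{a,1}^{(q)}(w)\mathrm{d}w}\big)=-g(y)e^{-\int_x^y b_{a,1}^{(q)}(w)\mathrm{d}w}$, and integrate from $x$ to $K$. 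The boundary term vanishes because $h(K)=0$: by Remark \ref{rmk31}, $X$ is upward regular at $K$, so from $X_0=K$ the maximum exceeds $K$ immediately and $M_{\tau_a}>K$ a.s., making $\{M_{\tau_a}\le K\}$ null under $\mathbb{P}_K$; continuity then yields $h(K^-)=0$ as well. What remains is exactly (\ref{triple LT}).
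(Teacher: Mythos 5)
Your proposal is correct and follows essentially the same route as the paper's own proof: the same strong-Markov renewal decomposition at $T_{x+\varepsilon}^{+}$ bounded via Corollary \ref{cor bd}, continuity of $h$ from the vanishing increments, the weak-convergence Fatou argument (Remark \ref{rk weak} and Lemma \ref{lem fatou}) for the jump term, matching left and right derivatives via the shifted base-point forms of (\textbf{A1})--(\textbf{A3}), and the integrating-factor integration using $h(K)=0$. The only cosmetic difference is your explicit justification of $h(K)=0$ through upward regularity (which the paper leaves implicit) and a harmless overstatement of ``locally Lipschitz'' where pointwise continuity is all that is needed and all that your estimates actually deliver.
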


\begin{proof}
By the strong Markov property of $X$, for any $X_{0}=x\leq y<K$ and
$0<\varepsilon<(K-y)\wedge a$, we have
\begin{align*}
h(y)  &  =\mathbb{E}_{y}\left[  e^{-q\tau_{a}-s(Y_{\tau_{a}}-a)}1_{\left\{\tau_{a}<\infty, 
\tau_{a}<T_{y+\varepsilon}^{+}\right\}  }\right]  +\mathbb{E}_{y}\left[
e^{-qT_{y+\varepsilon}^{+}}1_{\{T_{y+\varepsilon}^{+}<\infty, T_{y+\varepsilon}^{+}<\tau_{a}%
,X_{T_{y+\varepsilon}^{+}}=y+\varepsilon\}}\right]  h(y+\varepsilon)\\
&  +\int_{0}^{K-y-\varepsilon}\mathbb{E}_{y}\left[  e^{-qT_{y+\varepsilon}%
^{+}}1_{\{T_{y+\varepsilon}^{+}<\infty, T_{y+\varepsilon}^{+}<\tau_{a},X_{T_{y+\varepsilon}^{+}%
}-y-\varepsilon\in\mathrm{d}z\}}\right]  h(y+\varepsilon+z).
\end{align*}
By Corollary \ref{cor bd}, it follows that%
\begin{align}
h(y+\varepsilon)-h(y)  &  \geq-e^{s\varepsilon}C^{(q,s)}(y;y+\varepsilon
-a,y+\varepsilon)+\left(  1-B_{1}^{(q)}(y;y-a,y+\varepsilon)\right)
h(y+\varepsilon)\nonumber\\
&  -\int_{0}^{K-y-\varepsilon}h(y+\varepsilon+z)B_{2}^{(q)}(y,\mathrm{d}%
z;y-a,y+\varepsilon), \label{down}%
\end{align}
and%
\begin{align}
h(y+\varepsilon)-h(y)  &  \leq-e^{-s\varepsilon}C^{(q,s)}(y;y-a,y+\varepsilon
)+\left(  1-B_{1}^{(q)}(y;y+\varepsilon-a,y+\varepsilon)\right)
h(y+\varepsilon)\nonumber\\
&  -\int_{0}^{K-y-\varepsilon}h(y+\varepsilon+z)B_{2}^{(q)}(y,\mathrm{d}%
z;y+\varepsilon-a,y+\varepsilon). \label{up}%
\end{align}
By Assumptions (\textbf{A1})-(\textbf{A3}) and $h(\cdot)\in\lbrack0,1]$, it is
clear that both the lower bound of $h(y+\varepsilon)-h(y)$ in (\ref{down}) and
the upper bound in (\ref{up}) vanish as $\varepsilon\downarrow0$. Hence,
$h(y)$ is right continuous for $y\in\lbrack x,K)$. Replacing $y$ by
$y-\varepsilon$ in (\ref{down}) and (\ref{up}), and using Assumptions
(\textbf{A1})-(\textbf{A3}) again, it follows that $h(y)$ is also left
continuous for $y\in(x,K]$ with $h(K)=0$. Therefore, $h(y)$ is continuous for
$y\in\lbrack x,K]$ (left continuous at $x$ and right continuous at $K$).

To consecutively show the differentiability, we divide inequalities
(\ref{down}) and (\ref{up}) by $\varepsilon$. It follows from Assumptions
(\textbf{A1})-(\textbf{A3}), Remark \ref{rk weak}, Lemma \ref{lem fatou} and
the continuity of $h$ that
\begin{align*}
&  \liminf_{\varepsilon\downarrow0}\frac{h(y+\varepsilon)-h(y)}{\varepsilon}\\
&  \geq-c_{a}^{(q,s)}(y)+b_{a,1}^{(q)}(y)h(y)-\limsup_{\varepsilon\downarrow
0}\int_{0}^{K-y-\varepsilon}h(y+\varepsilon+z)\frac{B_{2}^{(q)}(y,\mathrm{d}%
z;y-a,y+\varepsilon)}{\varepsilon}\\
&  \geq-c_{a}^{(q,s)}(y)+b_{a,1}^{(q)}(y)h(y)-\int_{[0,K-y)}h(y+z)b_{a,2}%
^{(q)}(y,\mathrm{d}z)\text{,}%
\end{align*}
and similarly,
\[
\limsup_{\varepsilon\downarrow0}\frac{h(y+\varepsilon)-h(y)}{\varepsilon}%
\leq-c_{a}^{(q,s)}(y)+b_{a,1}^{(q)}(y)h(y)-\int_{[0,K-y)}h(y+z)b_{a,2}%
^{(q)}(y,\mathrm{d}z).
\]
Since the two limits coincide, one concludes that $h(y)$ is right
differentiable for $y\in(x,K)$. Moreover, by replacing $y$ by $y-\varepsilon$
in (\ref{down}) and (\ref{up}), and using similar arguments, we can show that
$h(y)$ is also left differentiable for $y\in(x,K)$. Since the left and right
derivatives coincide, we conclude that $h(y)$ is differentiable for any
$y\in(x,K)$ and solves the following ordinary integro-differential equation
(OIDE),%
\begin{equation}
h^{\prime}(y)-b_{a,1}^{(q)}(y)h(y)=-c_{a}^{(q,s)}(y)-\int_{[0,K-y)}%
h(y+z)b_{a,2}^{(q)}(y,\mathrm{d}z). \label{h'}%
\end{equation}

Multiplying both sides of (\ref{h'}) by $e^{-\int_{x}^{y}b_{a,1}%
^{(q)}(w)\mathrm{d}w}$, integrating the resulting equation (with respect to
$y$) from $x$ to $K$, and using $h(K)=0$, this completes the proof of Theorem
\ref{thm markov}.\bigskip
\end{proof}

When the Markov process $X$ is spectrally negative (i.e., with no upward
jumps), the upward overshooting density $b_{a,2}^{(q)}(x,\mathrm{d}z)$ is
trivially $0$. Theorem \ref{thm markov} reduces to the following corollary.

\begin{corollary}
\label{cor snm}Consider a spectrally negative time-homogeneous Markov process
$X$ satisfying Assumptions (\textbf{A1}) and (\textbf{A3}). For $q,s\geq0$ and
$K>0$, we have%
\[
\mathbb{E}_{x}\left[  e^{-q\tau_{a}-s(Y_{\tau_{a}}-a)}1_{\{\tau_{a}<\infty, M_{\tau_{a}}\leq
K\}}\right]  =\int_{x}^{K}e^{-\int_{x}^{y}b_{a,1}^{(q)}(w)\mathrm{d}w}%
c_{a}^{(q,s)}(y)\mathrm{d}y\text{,}\quad x\leq K.
\]

\end{corollary}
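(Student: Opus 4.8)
The plan is to read this off as a direct specialization of Theorem \ref{thm markov}: the entire content of the corollary is the assertion that for a spectrally negative process the upward overshoot measure $b_{a,2}^{(q)}(x,\mathrm{d}z)$ is identically the zero measure, so that the inner integral in (\ref{triple LT}) simply drops out. Thus I would first check that Theorem \ref{thm markov} is applicable even though only (\textbf{A1}) and (\textbf{A3}) are assumed (i.e. that (\textbf{A2}) holds automatically here), and then substitute $b_{a,2}^{(q)}\equiv 0$ into (\ref{triple LT}).

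The key structural step is that a spectrally negative $X$ has no upward jumps, so it can only cross an upper level by creeping; concretely, whenever $T_v^{+}<\infty$ we have $X_{T_v^{+}}=v$, $\mathbb{P}_x$-a.s. This is consistent with Remark \ref{rmk31}, which notes that creeping upward is in any case necessary for the assumptions to hold. Consequently the overshoot $X_{T_v^{+}}-v$ vanishes identically, and by its very definition the measure $B_{2}^{(q)}(x,\mathrm{d}z;u,v)$ --- which records a \emph{strictly positive} overshoot $z>0$ --- is the zero measure for all admissible $x,u,v$ and every $\varepsilon\in(0,a)$.

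From here I would argue that $b_{a,2}^{(q)}(x,\mathrm{d}z)\equiv 0$. Each rescaled family $\tfrac{1}{\varepsilon}B_{2}^{(q)}$ appearing in Assumption (\textbf{A2}) and Remark \ref{rk weak} is identically the zero measure, and the zero measure is the (unique) weak limit of zero measures; hence Assumption (\textbf{A2}) holds trivially with $b_{a,2}^{(q,s)}(x)\equiv 0$. The one point worth stating explicitly is that the atom at the origin flagged in Remark \ref{rk weak}, namely $b_{a,2}^{(q)}(x,\{0\})>0$, cannot arise here: that atom would be produced by mass of small positive overshoots accumulating near $z=0$ under the $1/\varepsilon$ scaling, and in the spectrally negative case there is no such mass at any scale. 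This is really the only step requiring care, and it is immediate once one observes that $B_2^{(q)}$ is genuinely the zero measure, not merely asymptotically small.

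With Assumptions (\textbf{A1})--(\textbf{A3}) now all in force, Theorem \ref{thm markov} applies to $h(x)=\mathbb{E}_{x}[e^{-q\tau_{a}-s(Y_{\tau_{a}}-a)}1_{\{\tau_a<\infty,\,M_{\tau_a}\le K\}}]$ and yields (\ref{triple LT}). Since $\int_{[0,K-y)}h(y+z)\,b_{a,2}^{(q)}(y,\mathrm{d}z)=0$ for every $y$, equation (\ref{triple LT}) collapses to $h(x)=\int_{x}^{K}e^{-\int_{x}^{y}b_{a,1}^{(q)}(w)\mathrm{d}w}\,c_{a}^{(q,s)}(y)\,\mathrm{d}y$, which is exactly the claimed identity. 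I anticipate no real obstacle beyond the bookkeeping around the potential atom at $0$ discussed above.
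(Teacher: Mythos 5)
Your proposal is correct and matches the paper's own (very brief) justification: the paper simply observes that for a spectrally negative process the overshoot measure $b_{a,2}^{(q)}(x,\mathrm{d}z)$ is trivially zero, so Theorem \ref{thm markov} collapses to the stated formula. Your additional care in verifying that (\textbf{A2}) then holds automatically with zero limit measure --- in particular that no atom at $z=0$ of the type flagged in Remark \ref{rk weak} can appear, since the prelimit measures are identically zero rather than merely small --- is exactly the detail the paper leaves implicit, and it is handled correctly.
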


When $X$ is a general L\'{e}vy process (possibly with two-sided jumps), we
have the following result for the joint Laplace transform of the triplet
$(\tau_{a},Y_{\tau_{a}},M_{\tau_{a}})$. Note that Corollary \ref{cor levy}
should be compared to Theorem 4.1 of Baurdoux \cite{B09}, in which, under the
L\'{e}vy framework, the resolvent density of $Y$ is expressed in terms of the
resolvent density of $X$ using excursion theory.

\begin{corollary}
\label{cor levy}Consider a L\'{e}vy process $X$ satisfying Assumptions
(\textbf{A1})-(\textbf{A3}). For $q,s,\delta\geq0$, we have\footnote{For L\'evy processes $\pr\{\tau_a<\infty\}=1$ as long as $X$ is not monotone.}
\begin{equation}
\mathbb{E}\left[  e^{-q\tau_{a}-s(Y_{\tau_{a}}-a)-\delta M_{\tau_{a}}}\right]
=\frac{c_{a}^{(q,s)}(0)}{\delta+b_{a}^{(q,\delta)}(0)}. \label{levy}%
\end{equation}

\end{corollary}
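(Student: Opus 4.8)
The plan is to exploit the spatial homogeneity of a L\'evy process to collapse the integral equation of Theorem~\ref{thm markov} into a constant-coefficient convolution equation, and then to read off \eqref{levy} by taking a Laplace transform in the maximum level $K$. First I would record that, for a L\'evy process, the two-sided exit quantities entering Assumptions \textbf{(A1)}--\textbf{(A3)} are translation invariant, so $b_{a,1}^{(q)}(x)\equiv b_{a,1}^{(q)}(0)=:b_1$, $c_{a}^{(q,s)}(x)\equiv c_{a}^{(q,s)}(0)=:c$, and $b_{a,2}^{(q)}(x,\mathrm{d}z)\equiv b_{a,2}^{(q)}(0,\mathrm{d}z)$, the latter with Laplace transform $b_{a,2}^{(q,\delta)}(0)=\int_{[0,\infty)}e^{-\delta z}b_{a,2}^{(q)}(0,\mathrm{d}z)$. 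The same invariance applied to the triplet $(\tau_a,Y_{\tau_a},M_{\tau_a})$ gives $h(x)=h(0;K-x)$, so writing $g(k):=\mathbb{E}[e^{-q\tau_a-s(Y_{\tau_a}-a)}1_{\{\tau_a<\infty,\,M_{\tau_a}\le k\}}]$ we have $h(x)=g(K-x)$ for fixed $K$. Here $g$ is continuous, differentiable on $(0,\infty)$, valued in $[0,1]$, and $g(0)=0$: starting from level $K$, upward regularity (Remark~\ref{rmk31}) forces $M_{\tau_a}>K$ a.s., which is exactly the boundary condition $h(K)=0$ used in Theorem~\ref{thm markov}.

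Substituting $h(y)=g(K-y)$, $h'(y)=-g'(K-y)$ and $h(y+z)=g(K-y-z)$ into the integro-differential equation \eqref{h'} and setting $k=K-y$ yields (for arbitrary $K$, hence for every $k>0$)
\[
g'(k)+b_1\,g(k)=c+\int_{[0,k)}g(k-z)\,b_{a,2}^{(q)}(0,\mathrm{d}z).
\]
Separately, since $M_{\tau_a}\ge X_0=0$ and $\tau_a<\infty$ a.s., I would write $e^{-\delta M_{\tau_a}}=\delta\int_0^\infty e^{-\delta K}1_{\{M_{\tau_a}\le K\}}\,\mathrm{d}K$ and apply Tonelli to get
\[
\mathbb{E}\!\left[e^{-q\tau_a-s(Y_{\tau_a}-a)-\delta M_{\tau_a}}\right]=\delta\int_0^\infty e^{-\delta K}g(K)\,\mathrm{d}K=\delta\,\widehat g(\delta),
\]
so the left-hand side of \eqref{levy} is precisely $\delta$ times the Laplace transform of $g$.

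It then remains to solve for $\widehat g(\delta)$. Taking Laplace transforms in $k$ of the displayed equation for $g$, using $g(0)=0$ (so the transform of $g'$ is $\delta\widehat g(\delta)$) and the convolution theorem for the measure $b_{a,2}^{(q)}(0,\mathrm{d}z)$, gives $(\delta+b_1)\widehat g(\delta)=c/\delta+b_{a,2}^{(q,\delta)}(0)\,\widehat g(\delta)$. Solving and invoking $b_1-b_{a,2}^{(q,\delta)}(0)=b_a^{(q,\delta)}(0)$ from \eqref{bbb} yields $\delta\widehat g(\delta)=c_a^{(q,s)}(0)/(\delta+b_a^{(q,\delta)}(0))$, which is \eqref{levy} for $\delta>0$; the case $\delta=0$ follows by letting $\delta\downarrow0$ (dominated convergence on the left, continuity of $s\mapsto b_a^{(q,s)}(0)$ on the right).

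The conceptual core is just the homogeneity reduction, so I expect the only genuinely delicate points to be the translation invariance of the exit quantities (which I would justify from the spatial homogeneity of $X$ together with the translation invariance of the drawdown and of $\tau_a$) and the rigorous Laplace/convolution manipulations, namely the Fubini interchange for the possibly atomic measure ($b_{a,2}^{(q)}(0,\{0\})>0$ is allowed by Remark~\ref{rk weak}) and the decay $e^{-\delta k}g(k)\to0$ needed for the transform of $g'$. These verifications, rather than any new idea, are where I anticipate the main work.
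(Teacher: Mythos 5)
Your proof is correct and follows essentially the same route as the paper: both exploit spatial homogeneity to collapse Theorem \ref{thm markov} into a constant-coefficient convolution equation and then take a Laplace transform in the barrier level $K$ (the paper phrases this as evaluating at an independent exponential level $e_\delta$), finishing via the identity (\ref{bbb}) and a monotone/dominated convergence argument for $\delta=0$. The only cosmetic difference is that you transform the differential form (\ref{h'}) using the derivative and convolution rules with $g(0)=0$, whereas the paper multiplies the integral equation (\ref{triple LT}) by $\delta e^{-\delta K}$ and integrates directly; the resulting algebra is identical.
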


\begin{proof}
By the spatial homogeneity of the L\'{e}vy process $X$, Eq. (\ref{triple LT})
at $x=0$ reduces to
\begin{equation}
h(0)=\frac{c_{a}^{(q,s)}(0)}{b_{a,1}^{(q)}(0)}\left(  1-e^{-b_{a,1}^{(q)}%
(0)K}\right)  +\int_{0}^{K}e^{-b_{a,1}^{(q)}(0)y}\int_{[0,K-y)}h(y+z)b_{a,2}%
^{(q)}(0,\mathrm{d}z)\mathrm{d}y. \label{h}%
\end{equation}
Let
\[
\hat{h}(0):=\mathbb{E}\left[  e^{-q\tau_{a}-s(Y_{\tau_{a}}-a)-\delta
M_{\tau_{a}}}\right]  =\mathbb{E}\left[  e^{-q\tau_{a}-s(Y_{\tau_{a}}%
-a)}1_{\{M_{\tau_{a}}\leq e_{\delta}\}}\right]  \text{,}%
\]
where $e_{\delta}$ is an independent exponential random variable with finite
mean $1/\delta>0$. Multiplying both sides of (\ref{h}) by $\delta e^{-\delta
K}$, integrating the resulting equation (with respect to $K$) from $0$ to
$\infty$, and using integration by parts, one obtains
\begin{align*}
\hat{h}(0)  &  =\frac{c_{a}^{(q,s)}(0)}{\delta+b_{a,1}^{(q)}(0)}+\int
_{0}^{\infty}\delta e^{-\delta K}\int_{0}^{K}e^{-b_{a,1}^{(q)}(0)y}%
\int_{[0,K-y)}h(y+z)b_{a,2}^{(q)}(0,\mathrm{d}z)\mathrm{d}y\mathrm{d}K\\
&  =\frac{c_{a}^{(q,s)}(0)}{\delta+b_{a,1}^{(q)}(0)}+\int_{0}^{\infty
}e^{-b_{a,1}^{(q)}(0)y}\mathrm{d}y\int_{%
%TCIMACRO{\U{211d} }%
%BeginExpansion
\mathbb{R}
%EndExpansion
_{+}}b_{a,2}^{(q)}(0,\mathrm{d}z)\int_{z+y}^{\infty}\delta e^{-\delta
K}\mathbb{E}\left[  e^{-q\tau_{a}-s(Y_{\tau_{a}}-a)}1_{\{M_{\tau_{a}}\leq
K-y-z\}}\right]  \mathrm{d}K\\
&  =\frac{c_{a}^{(q,s)}(0)}{\delta+b_{a,1}^{(q)}(0)}+\hat{h}(0)\frac{\int_{%
%TCIMACRO{\U{211d} }%
%BeginExpansion
\mathbb{R}
%EndExpansion
_{+}}e^{-\delta z}b_{a,2}^{(q)}(0,\mathrm{d}z)}{\delta+b_{a,1}^{(q)}(0)}.
\end{align*}
Solving for $\hat{h}(0)$ and using (\ref{bbb}), it follows that%
\[
\hat{h}(0)=\frac{c_{a}^{(q,s)}(0)}{\delta+b_{a,1}^{(q)}(0)-\int_{%
%TCIMACRO{\U{211d} }%
%BeginExpansion
\mathbb{R}
%EndExpansion
_{+}}e^{-\delta z}b_{a,2}^{(q)}(0,\mathrm{d}z)}=\frac{c_{a}^{(q,s)}(0)}%
{\delta+b_{a}^{(q,\delta)}(0)}.
\]
It follows from the monotone convergence theorem that (\ref{levy}) also holds
for $\delta=0$.\bigskip
\end{proof}

\begin{remark}
\label{rk levy} \normalfont
We point out that Assumptions (\textbf{A1})-(\textbf{A3}) are not necessary to
yield (\ref{levy}) in the L\'{e}vy framework. In fact, by the spatial
homogeneity of $X$, similar to (\ref{down}) and (\ref{up}), we have
\[
\frac{e^{-(s+\delta)\varepsilon}C^{(q,s)}(0;-a,\varepsilon)}{1-e^{-\delta
\varepsilon}B^{(q,\delta)}(0;\varepsilon-a,\varepsilon)}\leq\mathbb{E}\left[
e^{-q\tau_{a}-s(Y_{\tau_{a}}-a)-\delta M_{\tau_{a}}}\right]  \leq
\frac{e^{s\varepsilon}C^{(q,s)}(0;\varepsilon-a,\varepsilon)}{1-e^{-\delta
\varepsilon}B^{(q,\delta)}(0;-a,\varepsilon)},
\]
for any $\varepsilon\in(0,a)$. Suppose that the following condition holds:
\[
\lim_{\varepsilon\downarrow0}\frac{C^{(q,s)}(0;-a,\varepsilon)}{1-e^{-\delta
\varepsilon}B^{(q,\delta)}(0;\varepsilon-a,\varepsilon)}=\lim_{\varepsilon
\downarrow0}\frac{C^{(q,s)}(0;\varepsilon-a,\varepsilon)}{1-e^{-\delta
\varepsilon}B^{(q,\delta)}(0;-a,\varepsilon)}:=D_{a}^{(q,s,\delta)}%
\]
Then,
\[
\mathbb{E}\left[  e^{-q\tau_{a}-s(Y_{\tau_{a}}-a)-\delta M_{\tau_{a}}}\right]
=D_{a}^{(q,s,\delta)}.
\]

\end{remark}

Theorem \ref{thm markov} shows that the joint law $\mathbb{E}_{x}\left[
e^{-q\tau_{a}-s(Y_{\tau_{a}}-a)}1_{\{M_{\tau_{a}}\leq K\}}\right]  $ is a
solution to Eq. (\ref{triple LT}). Furthermore, the following theorem shows
that Eq. (\ref{triple LT}) admits a unique solution.

\begin{theorem}
Suppose that Assumptions (\textbf{A1})-(\textbf{A3}) hold. For $q,s\geq0$ and
$K>0$, Eq. (\ref{triple LT}) admits a unique solution.
\end{theorem}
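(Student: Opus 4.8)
The plan is to argue by linearity together with a Volterra-type Picard (factorial) estimate. Suppose $h_{1}$ and $h_{2}$ are two bounded solutions of (\ref{triple LT}) and set $g=h_{1}-h_{2}$. By linearity $g$ satisfies the homogeneous equation
\[
g(x)=\int_{x}^{K}e^{-\int_{x}^{y}b_{a,1}^{(q)}(w)\mathrm{d}w}\int_{[0,K-y)}g(y+z)\,b_{a,2}^{(q)}(y,\mathrm{d}z)\,\mathrm{d}y,\qquad x\le K.
\]
The first thing I would record is that this equation is self-contained on any interval $[x_{0},K]$: in the outer integral $y\in(x,K)$ and in the inner one $y+z\in(x,K)$, so $g|_{[x_{0},K]}$ is determined by the integral equation restricted to $[x_{0},K]$. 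Hence it suffices to fix an arbitrary $x_{0}<K$ and prove $g\equiv 0$ on $[x_{0},K]$; letting $x_{0}\to-\infty$ then gives uniqueness on $(-\infty,K]$. Set $G:=\sup_{[x_{0},K]}|g|<\infty$, finite because we work in the class of bounded solutions.

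Next I would collect the sign and integrability facts that turn the kernel into a sub-probability-type kernel. Since $B_{1}^{(q)}\le 1$ and $B^{(q,s)}\le 1$, the limits in (\textbf{A1}) and the identity (\ref{bbb}) give $b_{a,1}^{(q)}\ge 0$ and $b_{a}^{(q,0)}(y)=b_{a,1}^{(q)}(y)-b_{a,2}^{(q,0)}(y)\ge 0$, so that $0\le b_{a,2}^{(q,0)}(y)\le b_{a,1}^{(q)}(y)$. By Remark \ref{rk weak} the total mass is $b_{a,2}^{(q)}(y,\mathbb{R}_{+})=b_{a,2}^{(q,0)}(y)$, and is therefore dominated by $b_{a,1}^{(q)}(y)$. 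Consequently the tail function
\[
P(x):=\int_{x}^{K}b_{a,2}^{(q)}(y,\mathbb{R}_{+})\,\mathrm{d}y\le\int_{x}^{K}b_{a,1}^{(q)}(y)\,\mathrm{d}y<\infty
\]
is finite by the integrability clause of (\textbf{A1}); it is nonincreasing, absolutely continuous, $P(K)=0$, and $P'=-b_{a,2}^{(q)}(\cdot,\mathbb{R}_{+})$ a.e. I would also use the elementary bound $e^{-\int_{x}^{y}b_{a,1}^{(q)}(w)\mathrm{d}w}\le 1$ coming from $b_{a,1}^{(q)}\ge 0$.

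The core step is the factorial bound: I claim $|g(x)|\le G\,P(x)^{n}/n!$ for every $n\ge 0$ and every $x\in[x_{0},K]$. The case $n=0$ is just $|g|\le G$. For the induction I would insert the level-$n$ bound into the homogeneous equation, discard the exponential factor using $\le 1$, and exploit that $P$ is nonincreasing together with $y+z\ge y$ to get $P(y+z)\le P(y)$; this yields $\int_{[0,K-y)}|g(y+z)|\,b_{a,2}^{(q)}(y,\mathrm{d}z)\le \tfrac{G}{n!}\,P(y)^{n}\,b_{a,2}^{(q)}(y,\mathbb{R}_{+})$. Since $b_{a,2}^{(q)}(y,\mathbb{R}_{+})=-P'(y)$ a.e., the fundamental theorem of calculus gives $\int_{x}^{K}\tfrac{1}{n!}P(y)^{n}(-P'(y))\,\mathrm{d}y=P(x)^{n+1}/(n+1)!$ (using $P(K)=0$), which closes the induction at level $n+1$. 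Letting $n\to\infty$ and invoking $P(x)<\infty$ forces $g(x)=0$ on $[x_{0},K]$, hence $h_{1}=h_{2}$.

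I expect the genuine subtlety to lie in the two structural ingredients rather than in the iteration: namely, confirming that the total mass of the overshoot measure $b_{a,2}^{(q)}(y,\mathrm{d}z)$ is controlled by $b_{a,1}^{(q)}(y)$ (so that $P$ is finite through (\textbf{A1})), and that $(y,\mathrm{d}z)\mapsto b_{a,2}^{(q)}(y,\mathrm{d}z)$ is a genuine measurable kernel, so that the iterated integrals are well defined and Tonelli and the fundamental theorem of calculus apply. Once these are secured, the Volterra factorial estimate is routine, and the anti-causal direction of the integral (from $x$ up to $K$) plays no essential role beyond supplying the monotonicity $P(y+z)\le P(y)$ that drives the bound.
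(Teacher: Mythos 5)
Your proof is correct, and it takes a genuinely different route from the paper's. The paper casts (\ref{triple LT}) as a fixed-point problem for the operator $\mathcal{L}$ on $\mathbb{A}_{L}=\{f\in C[L,K]:f(K)=0\}$ and proves $\mathcal{L}$ is a strict contraction in the sup metric: it keeps the exponential factor $e^{-\int_{x}^{y}b_{a,1}^{(q)}(w)\mathrm{d}w}$, pairs it with the domination $c_{a}^{(q,s)}(y)+\int_{\mathbb{R}_{+}}b_{a,2}^{(q)}(y,\mathrm{d}z)\le b_{a,1}^{(q)}(y)$ (deduced from the pathwise inequality (\ref{C1B})), obtains the contraction constant $1-e^{-\int_{L}^{K}b_{a,1}^{(q)}(w)\mathrm{d}w}<1$, applies the Banach fixed point theorem, and then lets $L$ be arbitrary. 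You instead difference two solutions, exploit the anti-causal Volterra structure (correctly observing that the homogeneous equation is self-contained on $[x_{0},K]$), discard the exponential factor, and run a Picard-type factorial iteration against the tail mass $P$. The structural inputs coincide: nonnegativity of $b_{a,1}^{(q)}$, domination of the overshoot mass by $b_{a,1}^{(q)}$ (you derive it from (\ref{bbb}) at $s=0$ together with $B^{(q,0)}\le 1$ and Remark \ref{rk weak}; the paper derives the slightly stronger (\ref{ineq}), whose $c_{a}^{(q,s)}$ term you do not need since it cancels in the homogeneous equation), and local integrability of $b_{a,1}^{(q)}$ from (\textbf{A1}). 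As for what each approach buys: the paper's contraction argument yields existence of a fixed point in $\mathbb{A}_{L}$ as a by-product (though existence is already supplied by Theorem \ref{thm markov}) and uniqueness within continuous solutions, but it requires checking that $\mathcal{L}$ maps $\mathbb{A}_{L}$ into itself; your iteration yields uniqueness in the larger class of locally bounded measurable solutions --- every continuous solution on $[x_{0},K]$ is bounded there, so the paper's class is covered --- and needs no strict contraction bound, only the factorial gain. The measurability caveat you flag is shared by both arguments: already for (\ref{triple LT}) to make sense one needs $y\mapsto\int_{[0,K-y)}f(y+z)b_{a,2}^{(q)}(y,\mathrm{d}z)$ to be measurable, so this is implicit in the statement rather than a defect of your proof; and your fundamental-theorem-of-calculus step is legitimate because $P$ is absolutely continuous and bounded, hence so is $P^{n+1}$.
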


\begin{proof}
From Theorem \ref{thm markov}, we know that $h(x):=\mathbb{E}_{x}\left[
e^{-q\tau_{a}-s(Y_{\tau_{a}}-a)}1_{\{\tau_{a}<\infty, M_{\tau_{a}}\leq K\}}\right]  $ is a
solution of (\ref{triple LT}). We also notice that any continuous solution to
(\ref{triple LT}) must vanish when $x\uparrow$ $K$. For any fixed
$L\in(-\infty,K)$, we define a metric space $(\mathbb{A}_{L},\boldsymbol{d}%
_{L})$, where $\mathbb{A}_{L}=\left\{  f\in C[L,K],f(K)=0\right\}  $ and the
metric $\boldsymbol{d}_{L}(f,g)=\sup_{x\in\lbrack L,K]}|f(x)-g(x)|$ for
$f,g\in\mathbb{A}_{L}$. We then define a mapping $\mathcal{L}$ on
$\mathbb{A}_{L}$ by
\[
\mathcal{L}f(x)=\int_{x}^{K}e^{-\int_{x}^{y}b_{a,1}^{(q)}(w)\mathrm{d}%
w}\left(  c_{a}^{(q,s)}(y)+\int_{[0,K-y)}f(y+z)b_{a,2}^{(q)}(y,\mathrm{d}%
z)\right)  \mathrm{d}y,\text{\quad}x\in\lbrack L,K],
\]
where $f\in\mathbb{A}_{L}$. It is clear that $\mathcal{L}(\mathbb{A}%
_{L})\subset\mathbb{A}_{L}$.

Next we show that $\mathcal{L}:\mathbb{A}_{L}\rightarrow\mathbb{A}_{L}$ is a
contraction mapping. By the definitions of the two-sided exit quantities, for
any $y\in%
%TCIMACRO{\U{211d} }%
%BeginExpansion
\mathbb{R}
%EndExpansion
$, it follows that%
\begin{equation}
C^{(q,s)}(y;y-a,y+\varepsilon)+\int_{%
%TCIMACRO{\U{211d} }%
%BeginExpansion
\mathbb{R}
%EndExpansion
_{+}}B_{2}^{(q)}(y,\mathrm{d}z;y-a,y+\varepsilon)\leq1-B_{1}^{(q)}%
(y;y-a,y+\varepsilon). \label{C1B}%
\end{equation}
Dividing each term in (\ref{C1B}) by $\varepsilon\in(0,a)$ and letting
$\varepsilon\downarrow0$, it follows from Assumptions (\textbf{A1}%
)-(\textbf{A3}) that%
\begin{equation}
0\leq c_{a}^{(q,s)}(y)+\int_{%
%TCIMACRO{\U{211d} }%
%BeginExpansion
\mathbb{R}
%EndExpansion
_{+}}b_{a,2}^{(q)}(y,\mathrm{d}z)\leq b_{a,1}^{(q)}(y),\quad y\in%
%TCIMACRO{\U{211d} }%
%BeginExpansion
\mathbb{R}
%EndExpansion
. \label{ineq}%
\end{equation}
By (\ref{ineq}), we have for any $f,g\in\mathbb{A}_{L}$,
\begin{align*}
\boldsymbol{d}_{L}\left(  \mathcal{L}f,\mathcal{L}g\right)   &  \leq\sup
_{t\in\lbrack L,K]}\left\vert f(t)-g(t)\right\vert \sup_{x\in\lbrack L,K]}%
\int_{x}^{K}e^{-\int_{x}^{y}b_{a,1}^{(q)}(w)\mathrm{d}w}\int_{%
%TCIMACRO{\U{211d} }%
%BeginExpansion
\mathbb{R}
%EndExpansion
_{+}}b_{a,2}^{(q)}(y,\mathrm{d}z)\mathrm{d}y\\
&  \leq\boldsymbol{d}_{L}(f,g)\sup_{L\leq x\leq K}\int_{x}^{K}e^{-\int_{x}%
^{y}b_{a,1}^{(q)}(w)\mathrm{d}w}b_{a,1}^{(q)}(y)\mathrm{d}y\\
&  \leq\boldsymbol{d}_{L}(f,g)\left(  1-e^{-\int_{L}^{K}b_{a,1}^{(q)}%
(w)\mathrm{d}w}\right)  .
\end{align*}
Since $\int_{L}^{K}b_{a,1}^{(q)}(w)\mathrm{d}w<\infty$ by Assumption
(\textbf{A1}), one concludes that $\mathcal{L}:\mathbb{A}_{L}\rightarrow
\mathbb{A}_{L}$ is a contraction mapping. By Banach fixed point theorem, there
exists a unique fixed point in $\mathbb{A}_{L}$. By a restriction of domain,
it is easy to see that $\mathbb{A}_{L_{1}}\subset\mathbb{A}_{L_{2}}$ for
$-\infty<L_{1}<L_{2}<K$. By the arbitrariness of $L$, the uniqueness holds for
the space $\cap_{L<K}\mathbb{A}_{L}$. This completes the proof.
\end{proof}

For the reminder of this section, we state several examples of Markov
processes satisfying Assumptions (\textbf{A1})-(\textbf{A3}). Note that the
joint law of drawdown estimates for Examples \ref{eg SNLP} and
\ref{eg diffusion} were solved by Mijatovic and Pistorius \cite{MP12} and
Lehoczky \cite{L77}, respectively (using different approaches). Assumption
verifications for Examples \ref{eg PEMP} and \ref{eg JD} are postponed to Appendix.

\begin{example}
[Spectrally negative L\'{e}vy processes]\label{eg SNLP} \normalfont Consider a
spectrally negative L\'{e}vy process $X$. Let $\psi(s):=\frac{1}{t}%
\log\mathbb{E}[e^{sX_{t}}]$ $\left(  s\geq0\right)  $ be the Laplace exponent
of $X$. Further, let $W^{(q)}:%
%TCIMACRO{\U{211d} }%
%BeginExpansion
\mathbb{R}
%EndExpansion
\rightarrow\lbrack0,\infty)$ be the well-known $q$-scale function of $X$; see,
for instance Chapter 8 of Kyprianou \cite{K14}. The second scale function is
defined as $Z^{(q)}(x)=1+q\int_{0}^{x}W^{(q)}(y)\mathrm{d}y$. Under some mild
conditions (e.g., Lemma 2.4 of Kuznetsov et al. \cite{KKR12}), the scale
functions are continuously differentiable which further implies that
Assumptions (\textbf{A1}) and (\textbf{A3}) hold with
\begin{equation}
b_{a,1}^{(q)}(0)=\frac{W^{(q)\prime}(a)}{W^{(q)}(a)}\text{ and }c_{a}%
^{(q,s)}(0)=e^{sa}\frac{Z_{s}^{(p)}(a)W_{s}^{(p)\prime}(a)-Z_{s}^{(p)\prime
}(a)W_{s}^{(p)}(a)}{W_{s}^{(p)}(a)}, \label{bc}%
\end{equation}
where $p=q-\psi(s)$, and $W_{s}^{(p)}$ ($Z_{s}^{(p)}$) is the (second) scale
function of $X$ under a new probability measure\ $\mathbb{P}^{s}$ defined by
the Radon-Nikodym derivative process $\left.  \frac{\mathrm{d}\mathbb{P}^{s}%
}{\mathrm{d}\mathbb{P}}\right\vert _{\mathcal{F}_{t}}=e^{sX_{t}-\psi(s)t}$ for
$t\geq0$. Therefore, by Corollary \ref{cor levy} and (\ref{bc}), we have%
\[
\mathbb{E}\left[  e^{-q\tau_{a}-s(Y_{\tau_{a}}-a)-\delta M_{\tau_{a}}}\right]
=\frac{e^{sa}W^{(q)}(a)}{\delta W^{(q)}(a)+W^{(q)\prime}(a)}\frac{Z_{s}%
^{(p)}(a)W_{s}^{(p)\prime}(a)-pW_{s}^{(p)}(a)^{2}}{W_{s}^{(p)}(a)},
\]
which is consistent with Theorem 3.1 of Landriault et al. \cite{LLZ16}, and
\textcolor[rgb]{0.0,0.0,0.0}{Theorem 1 of
Mijatovic and Pistorius \cite{MP12}}.
\end{example}

\begin{example}
[Refracted L\'{e}vy processes]\label{eg refracted} \normalfont Consider a
refracted spectrally negative L\'{e}vy process $X$ of the form
\begin{equation}
X_{t}=U_{t}-\lambda\int_{0}^{t}1_{\{X_{s}>b\}}\mathrm{d}s, \label{RLEVY}%
\end{equation}
where $\lambda\geq0$, $b>0$, and $U$ is a spectrally negative L\'{e}vy process
(see Kyprianou and Loeffen \cite{KL10}). Let $W^{(q)}$ ($Z^{(q)}$) be the
(second) $q$-scale function of $U$, and $\mathbb{W}^{(q)}$ be the $q$-scale
function of the process $\{U_{t}-\lambda t\}_{t\geq0}$. Similar to Example
\ref{eg SNLP}, all the scale functions are continuously differentiable under
mild conditions.

For simplicity, we only consider the quantity
\textcolor[rgb]{0.0,0.0,0.0}{$\mathbb{E}_{x}\left[
e^{-q\tau_{a}}1_{\{\tau_a<\infty, M_{\tau_a}\le K\}}\right]  $} with $b>x-a$ (otherwise the
problem reduces to Example \ref{eg SNLP} for $X_{t}=U_{t}-\lambda t$). By
Theorem 4 of Kyprianou and Loeffen \cite{KL10}, one can verify that
Assumptions (\textbf{A1}) and (\textbf{A3}) hold. For $b>x$, from (\ref{bc})
with $s=0$, we have
\[
b_{a,1}^{(q)}(x)=\frac{W^{(q)\prime}(a)}{W^{(q)}(a)}\text{ and }c_{a}%
^{(q,0)}(x)=\frac{Z^{(q)}(a)W^{(q)\prime}(a)-Z^{(q)\prime}(a)W^{(q)}%
(a)}{W^{(q)}(a)}.
\]
For $x>b>x-a$,%
\[
b_{a,1}^{(q)}(x)=\frac{\left(  1+\lambda\mathbb{W}^{(q)}(0)\right)
W^{(q)\prime}(a)+\lambda\int_{b-x+a}^{a}\mathbb{W}^{(q)\prime}%
(a-y)W^{(q)\prime}(y)\mathrm{d}y}{W^{(q)}(a)+\lambda\int_{b-x+a}^{a}%
\mathbb{W}^{(q)}(a-y)W^{(q)\prime}(y)\mathrm{d}y}%
\]
and
\[
c_{a}^{(q,0)}(x)=\frac{k_{a}^{(q)}(x)}{W^{(q)}(a)+\lambda\int_{b-x+a}%
^{a}\mathbb{W}^{(q)}(a-y)W^{(q)\prime}(y)dy},
\]
where%
\begin{align*}
k_{a}^{(q)}(x)  &  =(1+\lambda\mathbb{W}^{(q)}(0))\left(  Z^{(q)}%
(a)W^{(q)\prime}(a)-qW^{(q)}(a)^{2}\right) \\
&  +\lambda q(1+\lambda\mathbb{W}^{(q)}(0))\int_{b-x+a}^{a}\mathbb{W}%
^{(q)}(a-y)\left(  W^{(q)\prime}(a)W^{(q)}(y)-W^{(q)}(a)W^{(q)\prime
}(y)\right)  \mathrm{d}y\\
&  -\lambda q\left[  W^{(q)}(a)+\lambda\int_{b-x+a}^{a}\mathbb{W}%
^{(q)}(a-y)W^{(q)\prime}(y)\mathrm{d}y\right]  \int_{b-x+a}^{a}\mathbb{W}%
^{(q)\prime}(a-y)W^{(q)}(y)\mathrm{d}y\\
&  +\lambda\left[  Z^{(q)}(a)+\lambda q\int_{b-x+a}^{a}\mathbb{W}%
^{(q)}(a-y)W^{(q)}(y)\mathrm{d}y\right]  \int_{b-x+a}^{a}\mathbb{W}%
^{(q)\prime}(a-y)W^{(q)\prime}(y)\mathrm{d}y.
\end{align*}
%{\bf [Scratch below:]}
%\begin{align*}
%C^{(q,0)}(x;x-a,x+\e)=&\ex_a(e^{-q\kappa_0^-}\ind_{\{\kappa_0^-<\kappa_{a+\epsilon}^+\}}|\text{refracting threshold }=b-x+a)\\
%=&N(a)-\frac{N(a+\epsilon)}{D(a+\epsilon)}D(a)
%\end{align*}
%where for $\epsilon\ge0$
%\begin{align*}
%N(a+\epsilon)=Z^{(q)}(a+\epsilon)+\lambda q\int_{b-x+a}^{a+\epsilon}\mathbb{W}^{(q)}(a+\epsilon-y)W^{(q)}(y)\diff y\\
%D(a+\epsilon)=W^{(q)}(a+\epsilon)+\lambda \int_{b-x+a}^{a+\epsilon}\mathbb{W}^{(q)}(a+\epsilon-y)W^{(q)\prime}(y)\diff y
%\end{align*}
%Hence,
%\begin{align*}
%c_a^{(q,0)}(x)=&-N'(a)+\frac{N(a)D'(a)}{D(a)}\\
%=&\frac{1}{D(a)}\bigg(N(a)D'(a)-N'(a)D(a)\bigg)
%\end{align*}
%{\bf[end of scratching]}
By Corollary \ref{cor snm}, we obtain
\[
\mathbb{E}_{x}\left[  e^{-q\tau_{a}}1_{\{M_{\tau_{a}}\leq K\}}\right]
=\int_{x}^{K}e^{-\int_{x}^{y}b_{a,1}^{(q)}(w)\mathrm{d}w}c_{a}^{(q,0)}%
(y)\mathrm{d}y\text{,}\quad x\leq K,
\]
which is a new result for the refracted L\'{e}vy process (\ref{RLEVY}).
\end{example}

\begin{example}
[Linear diffusion processes]\label{eg diffusion} \normalfont Consider a linear
diffusion process $X$ of the form
\[
\mathrm{d}X_{t}=\mu(X_{t})\mathrm{d}t+\sigma(X_{t})\mathrm{d}W_{t},
\]
where $(W_{t})_{t\geq0}$ is a standard Brownian motion, and the drift term
$\mu(\cdot)$ and local volatility $\sigma(\cdot)>0$ satisfy the usual
Lipschitz continuity and linear growth conditions. As a special case of the
jump diffusion process of Example \ref{eg JD}, it will be shown later that
Assumptions (\textbf{A1}) and (\textbf{A3}) hold for linear diffusion
processes. By Corollary \ref{cor snm}, we obtain
\[
\mathbb{E}_{x}\left[  e^{-q\tau_{a}}1_{\{\tau_a<\infty, M_{\tau_{a}}\leq K\}}\right]
=\int_{x}^{K}e^{-\int_{x}^{y}b_{a,1}^{(q)}(w)\mathrm{d}w}c_{a}^{(q,0)}%
(y)\mathrm{d}y\text{,}\quad x\leq K,
\]
which is consistent with Eq. (4) of Lehoczky \cite{L77}.
\end{example}

\begin{example}
[Piecewise exponential Markov processes]\label{eg PEMP} \normalfont Consider a
piecewise exponential Markov process (PEMP) $X$ of the form
\begin{equation}
\mathrm{d}X_{t}=\mu X_{t}\mathrm{d}t+\mathrm{d}Z_{t}, \label{PEMP}%
\end{equation}
where $\mu>0$ is the drift coefficient and $Z=(Z_{t})_{t\geq0}$ is a compound
Poisson process given by $Z_{t}=\sum_{i=1}^{N_{t}}J_{i}$. Here, $(N_{t}%
)_{t\geq0}$ is a Poisson process with intensity $\lambda>0$ and $J_{i}$'s are
iid copies of a real-valued random variable $J$ with cumulative distribution
function $F$.
\BIN{We also assume the initial value $X_0\geq a$ which ensures that $X_t\geq 0$ for all $t<\tau_a$. In this case, as discussed in Remark \ref{rmk31}, $X$ is upward regular and creeps upward before $\tau_a$.}
%For simplicity, we only consider the case
%$X_{0}=x>a$.
The first passage times of $X$ have been extensively studied in applied
probability; see, e.g., Tsurui and Osaki \cite{TO76} and Kella and Stadje
\cite{KS01}. For the PEMP (\ref{PEMP}), semi-explicit expressions for the
two-sided exit quantities $B_{1}^{(q)}(\cdot)$, $B_{2}^{(q)}(\cdot,\cdot)$ and
$C^{(q,s)}(\cdot)$ are given in Section 6 of Jacobsen and Jensen \cite{JJ07}.
As will be shown in Section \ref{ver pemp}, Assumptions (\textbf{A1}%
)-(\textbf{A3}) and Theorem \ref{thm markov} hold for the PEMP $X$ with a
continuous jump size distribution $F$.
\end{example}

\begin{example}
[Jump diffusion]\label{eg JD} \normalfont Consider a jump diffusion process
$X$ of the form
\begin{equation}
\mathrm{d}X_{t}=\mu(X_{t})\mathrm{d}t+\sigma(X_{t})\mathrm{d}W_{t}%
+\int_{-\infty}^{\infty}\gamma(X_{t-},z)N(\mathrm{d}t,\mathrm{d}z), \label{JD}%
\end{equation}
where $\mu(\cdot)$ and $\sigma(\cdot)>0$ are functions on $\mathbb{R}$,
$(W_{t})_{t\geq0}$ is a standard Brownian motion, $\gamma(\cdot,\cdot)$ is a
real-valued function on $\mathbb{R}^{2}$ modeling the jump size, and
$N(\mathrm{d}t,\mathrm{d}z)$ is an independent Poisson random measure on
$\mathbb{R}_{+}\times\mathbb{R}$ with a finite intensity measure
\textcolor[rgb]{0.0,0.0,0.0}{$\mathrm{d}t\times\nu(\mathrm{d}z)$}. For
specific $\mu(\cdot)$ and $\sigma(\cdot)$, the jump diffusion (\ref{JD}) can
be used to model the surplus process of an insurer with investment in risky
assets; see, e.g., Gjessing and Paulsen \cite{GP97} and Yuen et al.
\cite{YWN04}. We assume the same conditions as Theorem 1.19 of \O ksendal and
Sulem-Bialobroda \cite{OS07} so that (\ref{JD}) admits a unique c\`{a}dl\`{a}g
adapted solution. Under this setup, we show in Section \ref{ver JD} that
Assumptions (\textbf{A1})--(\textbf{A3}) and thus Theorem \ref{thm markov}
hold for the jump diffusion (\ref{JD}).
\end{example}

\section{Numerical examples}

\BIN{The main results of Section 3 rely on the analytic tractability of the
two-sided exit quantities. To further illustrate their applicability, we now
consider the numerical evaluation of the joint law of $(Y_{\tau_{a}}%
,M_{\tau_{a}})$ for two particular spatial-inhomogeneous Markov processes with
(positive) jumps through Theorem \ref{thm markov}. For simplicity, we assume
that the discount rate $q=0$ throughout this section.}

\subsection{PEMP}

\BIN{In this section, we consider the PEMP $X$ in Example \ref{eg PEMP} with
$\mu=1$, $\lambda=3$, and the generic jump size $J$ with density
\begin{equation}
p(x)=\left\{
\begin{array}
[c]{ll}%
\frac{1}{3}e^{-x}, & x>0,\\
\frac{1}{3}(e^{x}+2e^{2x}), & x<0.
\end{array}
\right.  \label{p1}%
\end{equation}
We follow Section 6 of Jacobsen and Jensen \cite{JJ07} to first solve for the
two-sided exit quantities. Define the integral kernel
\[
\psi_{0}(z):=\frac{1}{z(z+1)(z-1)(z-2)},\quad z\in%
%TCIMACRO{\U{2102} }%
%BeginExpansion
\mathbb{C}
%EndExpansion
,
\]
and the linearly independent functions%
\[%
\begin{array}
[c]{ll}%
g_{1}(x):=\frac{1}{2\pi\sqrt{-1}}\int_{\Gamma_{1}}\psi_{0}(z)e^{-xz}%
dz=\frac{1}{6}e^{-2x}, & g_{2}(x):=\frac{1}{2\pi\sqrt{-1}}\int_{\Gamma_{2}%
}\psi_{0}(z)e^{-xz}dz=-\frac{1}{2}e^{-x},\\
g_{3}(x):=\frac{1}{2\pi\sqrt{-1}}\int_{\Gamma_{3}}\psi_{0}(z)e^{-xz}%
dz=\frac{1}{2}, & g_{4}(x):=\frac{1}{2\pi\sqrt{-1}}\int_{\Gamma_{4}}\psi
_{0}(z)e^{-xz}dz=-\frac{1}{6}e^{x},
\end{array}
\]
for $x>0,$ where $\Gamma_{i}$ ($i=1,2,3,4$) is a small counterclockwise circle
centered at the pole $\mu_{i}=3-i$ of $\psi_{0}(z)$. Moreover, for $0<u<v$, we
consider the matrix-valued function
\[
(M_{i,k}({u,v}))_{1\leq i,k\leq4}:=%
\begin{pmatrix}
-\frac{1}{3}e^{-2u}(u+\frac{11}{6}) & \frac{e^{-2u}}{6} & \frac{e^{-2v}}{18} &
g_{1}(v)\\
e^{-u} & \frac{e^{-u}}{2}(u+\frac{1}{2}) & -\frac{e^{-v}}{4} & g_{2}(v)\\
-\frac{1}{2} & -\frac{1}{2} & \frac{1}{2} & g_{3}(v)\\
\frac{e^{u}}{9} & \frac{e^{u}}{12} & \frac{e^{v}}{6}(v-\frac{11}{6}) &
g_{4}(v)
\end{pmatrix}
,
\]
where the matrix $M$ entries are chosen according to
\[
\left\{
\begin{array}
[c]{l}%
M_{i,k}(u,v)=\frac{\mu_{k}}{2\pi\sqrt{-1}}\int_{\Gamma_{i}}\frac{\psi_{0}%
(z)}{z-\mu_{k}}e^{-uz}\mathrm{d}z,\quad1\leq i\leq4, k=1,2,\\
M_{i,3}(u,v)=\frac{|\mu_{4}|}{2\pi\sqrt{-1}}\int_{\Gamma_{i}}\frac{\psi
_{0}(z)}{z-\mu_{4}}e^{-vz}\mathrm{d}z,\quad1\leq i\leq4.
\end{array}
\right.
\]
Let $(N_{k,j}(u,v))_{1\leq k,j\leq4}$ be the inverse of $(M_{i,k}(u,v))_{1\leq
i,k\leq4}$.
Combining Eq. (46) and a generalized Eq. (48) of Jacobsen and Jensen
\cite{JJ07} (with $\zeta=s\geq0$ and $\rho\geq0$), we obtain the linear system
of equations
\begin{equation}
(c_{1},c_{2},c_{3},c_{4})(M_{i,k})=\left(  -\frac{2\underline{C}}{s+2}%
,-\frac{\underline{C}}{s+1},\frac{\overline{C}}{\rho+1},f(v)\right)  ,
\label{CCCC}%
\end{equation}
where $\underline{C}$ and\ $\overline{C}$ are constants specified later, and
$f(x)$ could stand for any of $B_{1}^{(0)}(x;u,v)$, $B_{2}^{(0,\rho)}(x;u,v)$,
or $C^{(0,s)}(x;u,v)$ and has the representation
\[
f(x)=\sum\limits_{i=1}^{4}c_{i}g_{i}(x),\quad x\in\lbrack u,v].
\]
}

\BIN{To solve for $B_{1}^{(0)}(x;u,v)$, $B_{2}^{(0,\rho)}(x;u,v)$, or
$C^{(0,s)}(x;u,v)$, we only need to solve (\ref{CCCC}) with different assigned
values of $\underline{C}$, $\overline{C}$, and $f(v)$ according to Eq. (45) of
Jacobsen and Jensen \cite{JJ07}. By letting $\underline{C}=\overline{C}=0$ and
$f(v)=1$, we obtain
\[
B_{1}^{(0)}(x;u,v)=\sum_{i=1}^{4}N_{4,i}(u,v)g_{i}(x).
\]
Similarly, by letting $\underline{C}=f(v)=0$ and $\overline{C}=1$, for
$\rho\geq0$, we obtain
\[
B_{2}^{(0,\rho)}(x;u,v)=\frac{1}{1+\rho}\sum_{i=1}^{4}N_{3,i}(u,v)g_{i}(x).
\]
A Laplace inversion with respect to $\rho$ yields, for $z>0$,
\[
B_{2}^{(0)}(x,\mathrm{d}z;u,v)=e^{-z}\sum_{i=1}^{4}N_{3,i}(u,v)g_{i}%
(x)\mathrm{d}z.
\]
By letting $\underline{C}=1$ and $\overline{C}=f(v)=0$, for $s\geq0$, we
obtain
\[
C^{(0,s)}(x;u,v)=\sum_{i=1}^{4}\left(  \frac{-2}{s+2}N_{1,i}(u,v)+\frac
{-1}{s+1}N_{2,i}(u,v)\right)  g_{i}(x).
\]
By the definitions, we have
\begin{align*}
b_{a,1}^{(0)}(x)  &  =-\sum_{i=1}^{4}D_{4,i}(x-a,x)g_{i}(x),\\
b_{a,2}^{(0)}(x,\mathrm{d}z)  &  =e^{-z}\left(  \sum_{i=1}^{4}D_{3,i}%
(x-a,x)g_{i}(x)\right)  \mathrm{d}z,\\
c_{a}^{(0,s)}(x)  &  =\sum_{i=1}^{4}\left(  \frac{-2}{s+2}D_{1,i}%
(x-a,x)+\frac{-1}{s+1}D_{2,i}(x-a,x)\right)  g_{i}(x),
\end{align*}
where we denote $D_{k,j}(u,v):=\frac{\partial}{\partial v}N_{k,j}(u,v)$.}

\BIN{In Figure \ref{fig1} below, we use \textsf{Mathematica} to numerically solve
the integral equation (\ref{triple LT}).}

\begin{figure}[H]
\centering
{{\includegraphics[width=220pt]{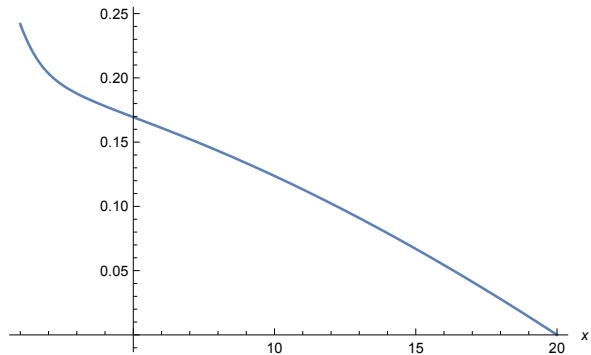} }}
%\quad
%\subfloat[$h(x,z)$]{{\includegraphics[width=220pt]{density.eps} }}
\caption{Plot of the probability $h(x)=\mathbb{P}_{x}\{M_{\tau_{a}}\le K\}$ for PEMP \eqref{PEMP} with $q=0,\mu=1,\lambda
=3,a=1,K=20$ and jump size distribution given in \eqref{p1}}%
\label{fig1}%
\end{figure}

\subsection{A jump diffusion model}

\BIN{In this section, we consider a generalized PEMP $(X_{t})_{t\geq0}$ with
diffusion whose dynamics is governed by
\begin{equation}
\mathrm{d}X_{t}=X_{t}\mathrm{d}t+\sqrt{2}\mathrm{d}W_{t}+\mathrm{d}Z_{t},\quad
t>0, \label{eq:jd}%
\end{equation}
where the initial value $X_{0}=x\in%
%TCIMACRO{\U{211d} }%
%BeginExpansion
\mathbb{R}
%EndExpansion
$, $(W_{t})_{t\geq0}$ is a standard Brownian motion, and $(Z_{t})_{t\geq0}$ is
an independent compound Poisson process with a unit jump intensity and a unit
mean exponential jump distribution. The two-sided exit quantities of this
generalized PEMP can also be solved using the approach described in Sections 6
and 7 of Jacobsen and Jensen \cite{JJ07}.}

\BIN{We define an integral kernel
\[
\psi_{1}(z)=\frac{e^{\frac{z^{2}}{2}}}{z(z+1)},\quad z\in%
%TCIMACRO{\U{2102} }%
%BeginExpansion
\mathbb{C}
%EndExpansion
\text{.}%
\]
Let $\Gamma_{i}$ $\left(  i=1,2\right)  $ be small counterclockwise circles
around the simple poles $\mu_{1}=0$ and $\mu_{2}=-1$, respectively, and define
the linearly independent functions%
\begin{align*}
g_{1}(x)  &  :=\frac{1}{2\pi\sqrt{-1}}\int_{\Gamma_{1}}\psi_{1}(z)e^{-xz}%
dz=1,\\
g_{2}(x)  &  :=\frac{1}{2\pi\sqrt{-1}}\int_{\Gamma_{2}}\psi_{1}(z)e^{-xz}%
dz=-e^{x+\frac{1}{2}},
\end{align*}
for $x\in%
%TCIMACRO{\U{211d} }%
%BeginExpansion
\mathbb{R}
%EndExpansion
$. To find another linearly independent partial eigenfunction, we consider the
vertical line $\Gamma_{3}=\{1+t\sqrt{-1},t\in%
%TCIMACRO{\U{211d} }%
%BeginExpansion
\mathbb{R}
%EndExpansion
\}$ and define
\begin{equation}
g_{3}(x):=\frac{1}{2\pi\sqrt{-1}}\int_{\Gamma_{3}}\psi_{1}(z)e^{-xz}%
\mathrm{d}z. \label{eq:G3last}%
\end{equation}
Next we derive an explicit expression for $g_{3}(x)$. We know from
(\ref{eq:G3last}) that $\lim_{x\rightarrow\infty}g_{3}(x)=0$ and $g_{3}$ is
continuously differentiable with
\begin{equation}
g_{3}^{\prime}(x)=-\frac{1}{2\pi\sqrt{-1}}\int_{\Gamma_{3}}\frac
{e^{\frac{z^{2}}{2}}}{z+1}e^{-xz}\mathrm{d}z. \label{eq:last1}%
\end{equation}
Notice that the bilateral Laplace transform functions (e.g., Chapter VI of
\cite{W46}) of a standard normal random variable $U_{1}$ and an independent
unit mean exponential random variables $U_{2}$ are given respectively by
\[
\int_{-\infty}^{\infty}e^{-zy}\cdot\frac{1}{\sqrt{2\pi}}e^{-\frac{y^{2}}{2}%
}\mathrm{d}y=e^{\frac{z^{2}}{2}},\quad\int_{0}^{\infty}e^{-zy}\cdot
e^{-y}\mathrm{d}y=\frac{1}{z+1},
\]
for all complex $z$ such that $\Re(z)\geq0$. Hence, the bilateral Laplace
transform of the density function of $U_{1}+U_{2}$, i.e.,
\[
\int_{0}^{\infty}\frac{1}{\sqrt{2\pi}}e^{-\frac{(x-y)^{2}}{2}}e^{-y}%
\mathrm{d}y
\]
is given by $e^{\frac{z^{2}}{2}}/(z+1)$ for all complex $z$ such that
$\Re(z)\geq0$. Since the right hand side of (\ref{eq:last1}) is just the
Bromwich integral for the inversion of the bilateral Laplace transform
$-e^{\frac{z^{2}}{2}}/(z+1)$, evaluated at $-x$, we deduce that
\[
g_{3}^{\prime}(x)=-\int_{0}^{\infty}\frac{1}{\sqrt{2\pi}}e^{-\frac{(x+y)^{2}%
}{2}}e^{-y}\mathrm{d}y.
\]
It follows that
\[
g_{3}(x)=-\int_{x}^{\infty}g_{3}^{\prime}(y)\mathrm{d}y=1-\int_{0}^{\infty
}N(x+y)e^{-y}\mathrm{d}y.
\]
where $N(\cdot)$ is the cumulative distribution function of standard normal distribution.}

\BIN{For any fixed $-\infty<u<v<\infty$, we define a matrix-valued function
\[
(M_{i,k}(u,v))_{1\leq i,k\leq3}:=%
\begin{pmatrix}
1 & g_{1}(v) & g_{1}(u)\\
ve^{v+\frac{1}{2}} & g_{2}(v) & g_{2}(u)\\
1-\int_{0}^{\infty}N(v+y)ye^{-y}\mathrm{d}y & g_{3}(v) & g_{3}(u)
\end{pmatrix}
,
\]
where the first row is computed according to
\[
M_{i,1}(u,v)=\frac{1}{2\pi\sqrt{-1}}\int_{\Gamma_{i}}\frac{\psi_{0}(z)}%
{z+1}e^{-vz}\mathrm{d}z.
\]
Notice that $M_{3,1}(u,v)$ can be calculated in the same way as $g_{3}(x)$. We
also denote by $(N_{k,j}(u,v))_{1\leq k,j\leq3}$ the inverse of $(M_{i,k}%
(u,v))_{1\leq i,k\leq3}$.}

\BIN{By Eq. (46) and a generalized Eq. (48) of Jacobsen and Jensen \cite{JJ07}
(with $\zeta=s=0$ and $\rho\geq0$), we obtain the linear system of equations
\begin{equation}
(c_{1},c_{2},c_{3})(M_{i,k})=\left(  \frac{\overline{C}}{\rho+1}%
,f(v),f(u)\right)  , \label{CCC}%
\end{equation}
where $\overline{C}$ is a constant specified later, and $f(x)$ could stand for
any of $B_{1}^{(0)}(x;u,v)$, $B_{2}^{(0,\rho)}(x;u,v)$, or $C^{(0,0)}(x;u,v)$
and has the representation
\[
f(x)=\sum\limits_{i=1}^{3}c_{i}g_{i}(x),\quad x\in\lbrack u,v].
\]
By letting (1) $\overline{C}=f(u)=0$ and $f(v)=1$, (2) $\overline{C}=1$ and
$f(v)=f(u)=0$, (3) $\overline{C}=f(v)=0$ and $f(u)=1$, for any $\rho\geq0$ and
$z>0$, and solving the linear system (\ref{CCC}), we respectively obtain
\begin{align*}
B_{1}^{(0)}(x;u,v)  &  =\sum_{i=1}^{3}N_{2,i}(u,v)g_{i}(x),\\
B_{2}^{(0,\rho)}(x;u,v)  &  =\frac{1}{1+\rho}\sum_{i=1}^{3}N_{1,i}%
(u,v)g_{i}(x),\quad B_{2}^{(0)}(x,\mathrm{d}z;u,v)=e^{-z}\sum_{i=1}^{3}%
N_{1,i}(u,v)g_{i}(x)\mathrm{d}z,\\
C^{(0,0)}(x;u,v)  &  =\sum_{i=1}^{3}N_{3,i}(u,v)g_{i}(x).
\end{align*}
Furthermore, this implies
\begin{align*}
b_{a,1}^{(0)}(x)  &  =-\sum_{i=1}^{3}D_{2,1}(x-a,x)g_{i}(x),\\
b_{a,2}^{(0)}(x,\mathrm{d}z)  &  =e^{-z}\left(  \sum_{i=1}^{3}D_{1,i}%
(x-a,x)g_{i}(x)\right)  ,\\
c_{a}^{(0,0)}(x)  &  =\sum_{i=1}^{3}D_{3,i}(x-a,x)g_{i}(x),
\end{align*}
where we denote $D_{k,j}(u,v)=\frac{\partial}{\partial v}N_{k,j}(u,v)$.}

\BIN{In Figure 2 below, we plot $h(x)=\mathbb{P}_{x}\{M_{\tau_{a}}\leq K\}$ by
numerically solving the integral equation (\ref{triple LT}) using
\textsf{Mathematica}.}

\begin{figure}[H]
\centering
{\includegraphics[width=220pt]{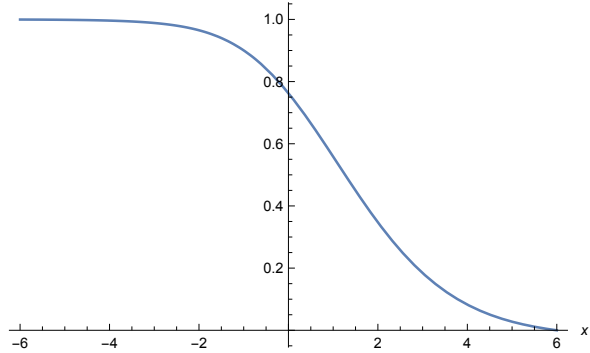}}\caption{Plot of the probability
$h(x)=\mathbb{P}_{x}\{M_{\tau_{a}}\le K\}$ for the jump diffusion in
\eqref{eq:jd} with $K=6$ and $a=1$.}%
\label{fig2}%
\end{figure}

\section{Acknowledgments}

The authors would like to thank two anonymous referees for their helpful
comments and suggestions. Support from grants from the Natural Sciences and
Engineering Research Council of Canada is gratefully acknowledged by David
Landriault and Bin Li (grant numbers 341316 and 05828, respectively). Support
from a start-up grant from the University of Waterloo is gratefully
acknowledged by Bin Li, as is support from the Canada Research Chair Program
by David Landriault.

\appendix

\section{Appendix}

\subsection{Proof of Lemma \ref{lem fatou}}

We define $\psi _{n}(z)=\inf_{m\geq n}\phi _{m}(z)$ for $z\in S$. Further,
we define $\underline{\psi }_{n}(z)=\liminf_{w\rightarrow z}\psi _{n}(w)$
which is lower semi-continuous (see, e.g., Lemma 5.13.4 of Berberian \cite%
{B99}). Note that $\underline{\psi }_{n}$ is increasing in $n$, and by the definition of $\underline{\psi}_n$, we have 
\begin{align*}
\lim_{n\rightarrow \infty }\underline{\psi }_{n}(z)=&\lim_{n\rightarrow
	\infty }\lim_{r\downarrow 0}\inf_{w\in (z-r,z+r)}\inf_{m\geq n}\phi
_{m}(w)\\
=&\lim_{n\rightarrow
	\infty }\lim_{r\downarrow 0}\inf_{m\geq n,w\in (z-r,z+r)}\phi
_{m}(w)\equiv\liminf_{n\rightarrow \infty ,w\rightarrow z}\phi _{n}(w),
\end{align*}%
where the second equality is because there is no ambiguity in switching the order of two infimums. 
By the monotone convergence theorem, we have 
\begin{equation}
\int_{S}\liminf_{n\rightarrow \infty ,w\rightarrow z}\phi _{n}(w)\mathrm{d}%
\mu (z)=\lim_{n\rightarrow \infty }\int_{S}\underline{\psi }_{n}(z)\mathrm{d}%
\mu (z).  \label{lim1}
\end{equation}%
By Portmanteau theorem of weak convergence and the fact that $\underline{%
	\psi }_{n}(z)$ is nonnegative and lower semi-continuous, it follows that 
\begin{equation}
\int_{S}\underline{\psi }_{n}(z)\mathrm{d}\mu (z)\leq \liminf_{m\rightarrow
	\infty }\int_{S}\underline{\psi }_{n}(z)\mathrm{d}\mu _{m}(z)  \label{lim2}
\end{equation}%
for any $n\in 
%TCIMACRO{\U{2115} }%
%BeginExpansion
\mathbb{N}
%EndExpansion
$. Moreover, since $\psi _{n}(z)$ is monotone increasing in $n$, we have%
\begin{equation}
\liminf_{m\rightarrow \infty }\int_{S}\underline{\psi }_{n}(z)\mathrm{d}\mu
_{m}(z)\leq \liminf_{m\rightarrow \infty }\int_{S}\underline{\psi }_{m}(z)%
\mathrm{d}\mu _{m}(z).  \label{lim3}
\end{equation}%
By (\ref{lim1})-(\ref{lim3}),%
\begin{equation*}
\int_{S}\liminf_{n\rightarrow \infty ,w\rightarrow z}\phi _{n}(w)\mathrm{d}%
\mu (z)\leq \liminf_{m\rightarrow \infty }\int_{S}\underline{\psi }_{m}(z)%
\mathrm{d}\mu _{m}(z)\leq \liminf_{m\rightarrow \infty }\int_{S}\phi _{m}(z)%
\mathrm{d}\mu _{m}(z),
\end{equation*}%
where the last inequality is due to $\underline{\psi }_{m}(z)\leq \psi
_{m}(z)\leq \phi _{m}(z)$. 

%Moreover, if  $\phi _{n}$ is continuous for all $n\in\mathbb{R}$, then from $\lim_{w\rightarrow z}\phi_n(w)=\phi_n(z)$, we deduce that
%$\lim_{r\downarrow0}\inf_{m\geq n,w\in (z-r,z+r)}\phi
%_{m}(w)=\inf_{m\geq n}\phi
%_{m}(z)$ holds.

Suppose that $\{\phi _{n}\}_{n\in 
	%TCIMACRO{\U{2115} }%
	%BeginExpansion
	\mathbb{N}
	%EndExpansion
}$ is uniformly bounded by $K>0$, by applying (\ref{inf}) to $\{K-\phi
_{n}\}_{n\in 
	%TCIMACRO{\U{2115} }%
	%BeginExpansion
	\mathbb{N}
	%EndExpansion
}$, we obtain 
\begin{align*}
K\mu (S)-\int_{S}\limsup_{n\rightarrow \infty , w\rightarrow z}\phi _{n}(w)\mathrm{d}\mu
(z)& =\int_{S}\liminf_{n\rightarrow \infty, w\rightarrow z}(K-\phi _{n}(w))\mathrm{d}\mu (z)
\\
& \leq \liminf_{n\rightarrow \infty }\int_{S}(K-\phi _{n}(z))\mathrm{d}\mu
_{n}(z) \\
& =K\liminf_{n\rightarrow \infty }\mu _{n}(S)-\limsup_{n\rightarrow \infty
}\int_{S}\phi _{n}(z)\mathrm{d}\mu _{n}(z).
\end{align*}%
Therefore, inequality (\ref{sup}) follows immediately by the weak
convergence of $\mu _{n}$ and $\mu (S)<\infty $.

\subsection{Assumption verification for Example \ref{eg PEMP}\label{ver pemp}}

\begin{lemma}
\label{lem pemp}Consider the PMEP (\ref{PEMP}) with a continuous jump size
distribution $F(\cdot)$. For $q,s\geq0$ and $0<u_{0}<x_{0}<v_{0}$, we have%
\[
\lim_{(u,v)\downarrow(u_{0},v_{0})}g(x_{0};u,v)=\lim_{(x,u)\uparrow
(x_{0},u_{0})}g(x;u,v_{0})=g(x_{0},u_{0},v_{0}),
\]
where the function $g(x;u,v)$ is any of the following three functions:
$B_{1}^{(q)}(x;u,v)$, $B_{2}^{(q,s)}(x;u,v)$ and $C^{(q,s)}(x;u,v)$.
\end{lemma}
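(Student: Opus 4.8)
The plan is to exploit the piecewise-deterministic structure of the PEMP and argue pathwise via a coupling that drives all starting points and barrier levels with the \emph{same} Poisson jumps. Recall that between the jump epochs $0<\sigma_1<\sigma_2<\cdots$ the process evolves deterministically according to $\dot{X}_t=\mu X_t$, so that $X_t=X_{\sigma_i}e^{\mu(t-\sigma_i)}$ on $[\sigma_i,\sigma_{i+1})$ with $X_{\sigma_i}=X_{\sigma_i-}+J_i$; consequently, for a fixed realization of $(\sigma_i,J_i)_{i\ge 1}$, the whole path $t\mapsto X_t$ is an explicit function of the initial value $x$, jointly continuous and in fact monotone in $x$ (since $x\mapsto xe^{\mu t}$ is increasing and the jumps are additive). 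First I would fix such a realization $\omega$ and regard the first passage times $T_v^+$, $T_u^-$, the overshoots, and the indicator events defining $B_1^{(q)}$, $B_2^{(q,s)}$ and $C^{(q,s)}$ as functionals of $(x,u,v)$ along this coupling.

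Next I would establish $\mathbb{P}_{x_0}$-a.s.\ pathwise continuity of these functionals under the two prescribed one-sided limits. Because $\mu>0$, the path increases strictly and deterministically between jumps, so upward passage occurs either by continuous creeping (where $X_{T_v^+}=v$) or by an upward jump (where $X_{T_v^+}>v$); the creeping passage time depends continuously on $v$, while a jump straddling $v$ keeps its crossing character under a small perturbation of $v$ unless it lands exactly on $v$. Downward passage occurs only through jumps, so $X_{T_u^-}<u$, and perturbing $u$ alters the exit event only if a post-jump position equals $u$ exactly. The crucial step is to rule out these boundary coincidences: I would invoke the continuity of $F$ to show that, $\mathbb{P}_{x_0}$-a.s., no jump lands exactly at $u_0$ or $v_0$, the running minimum of $X$ prior to $T_{v_0}^+$ differs from $u_0$, and the creeping time to $v_0$ is not a jump epoch. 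Since between jumps extrema are attained only immediately after a jump (or at the start), the post-jump positions form a sequence with continuous law, so each such coincidence is a $\mathbb{P}_{x_0}$-null event.

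With these null events excluded, I would verify that as $(u,v)\downarrow(u_0,v_0)$ with $x=x_0$ fixed, and as $(x,u)\uparrow(x_0,u_0)$ with $v=v_0$ fixed, the passage times, overshoots, and exit indicators converge $\mathbb{P}_{x_0}$-a.s.\ to their values at $(x_0,u_0,v_0)$; for the second limit the moving initial value is absorbed by the continuous dependence of the coupled flow on $x$. Finally, since $q,s\ge 0$ forces all discounting and overshoot factors into $[0,1]$, I would pass from pathwise convergence to convergence of the expectations $B_1^{(q)}$, $B_2^{(q,s)}$, $C^{(q,s)}$ by dominated convergence, yielding the claimed identity for each of the three functions.

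The main obstacle I anticipate is the careful handling of the creeping-versus-jumping dichotomy under the \emph{simultaneous} perturbation of two arguments: one must ensure that the classification of the upward exit and the value of the corresponding overshoot are stable along the joint limit, which is precisely where the continuity of $F$ is indispensable to force the offending boundary events to be $\mathbb{P}_{x_0}$-null.
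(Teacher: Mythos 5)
Your proof is correct, but it takes a genuinely different route from the paper's. The paper argues directly at the level of expectations: it bounds $\left|B_1^{(q)}(x_0;u_0,v_0)-B_1^{(q)}(x_0;u,v)\right|$ by three error terms and kills each one quantitatively --- the overshoot term $\mathbb{P}_{x_0}\{v_0<X_{T_{v_0}^+}\le v\}$ by right-continuity of the distribution function of $X_{T_{v_0}^+}$, a creeping term via the strong Markov property at $T_{v_0}^+$ together with the explicit no-jump computation $1-\mathbb{E}_{v_0}\bigl[e^{-qT_v^+}1_{\{\zeta>T_v^+\}}\bigr]=1-(v/v_0)^{-(q+\lambda)/\mu}$, and the strip term $\mathbb{P}_{x_0}\{T_u^-<T_{v_0}^+<T_{u_0}^-\}$ by conditioning on the pre-jump position $X_{T_u^- -}$, which gives the bound $\max_{u_0\le y\le v_0}\left(F(y-u_0)-F(y-u)\right)$ and exploits the uniform continuity of $F$ on compacts. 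You instead couple every starting point and barrier configuration on a single realization of the marked Poisson point process --- legitimate here because the PEMP is affine in its initial condition, $X_t=e^{\mu t}\bigl(x+\int_0^t e^{-\mu s}\,\mathrm{d}Z_s\bigr)$ --- prove a.s.\ convergence of the exit times, overshoots and indicators after discarding exact boundary hits, and finish with bounded convergence. Continuity of $F$ thus enters your proof as atomlessness (post-jump positions hit $u_0$ or $v_0$ with probability zero), whereas the paper uses it as a modulus of continuity; the paper's method yields explicit rates and avoids path-level case analysis, while yours treats all three functionals and both one-sided limits in one uniform pathwise framework and makes transparent why only exact boundary hits can destroy convergence. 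Two points should be made explicit to seal your argument: (i) the inter-jump flow $\dot{x}=\mu x$ is increasing only on $(0,\infty)$, so it is the hypothesis $0<u_0<u$ (not $\mu>0$ alone) that forces the path to be strictly increasing between jumps prior to exiting $(u,v)$ --- this is precisely why the lemma assumes $0<u_0$, as the paper itself remarks; and (ii) besides post-jump positions, the pre-jump ``grazing'' positions $X_{\sigma_i-}$ must also avoid $u_0$ and $v_0$, and these events are null by continuity of the exponential holding-time distribution rather than by continuity of $F$.
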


\begin{proof}
\BIN{Note that the condition $0<u_{0}<x_{0}<v_{0}$ is to ensure the process $X$ remains positive before exiting these finite intervals, which further implies $X$ is upward regular and creeps upward.}
We limit our proof to
\begin{equation}
\lim_{(u,v)\downarrow(u_{0},v_{0})}B_{1}^{(q)}(x_{0};u,v)=B_{1}^{(q)}%
(x_{0};u_{0},v_{0}). \label{B1eq}%
\end{equation}
The other results can be proved in a similar manner.
\BIN{By the relationship $v>v_0>u>u_0$, we have}
\begin{align}
&  \left\vert B_{1}^{(q)}(x_{0};u_{0},v_{0})-B_{1}^{(q)}(x_{0};u,v)\right\vert
\nonumber\\
&  \leq\left\vert \mathbb{E}_{x_{0}}\left[  e^{-qT_{v_{0}}^{+}}1_{\{T_{v_{0}%
}^{+}<T_{u_{0}}^{-},X_{T_{v_{0}}^{+}}=v_{0}\}}\right]  -\mathbb{E}_{x_{0}%
}\left[  e^{-qT_{v}^{+}}1_{\{T_{v}^{+}<T_{u}^{-},X_{T_{v}^{+}}=v,X_{T_{v_{0}%
}^{+}}=v_{0}\}}\right]  \right\vert \nonumber\\
&  +\mathbb{P}_{x_{0}}\left\{  v_{0}<X_{T_{v_{0}}^{+}}\leq v\right\}  .
\label{B1 in}%
\end{align}
It is clear that the last term of (\ref{B1 in}) vanishes as $v\downarrow
v_{0}$ by the right-continuity of the distribution function of $X_{T_{v_{0}%
}^{+}}$. Also,%
\begin{align}
&  \left\vert \mathbb{E}_{x_{0}}\left[  e^{-qT_{v_{0}}^{+}}1_{\{T_{v_{0}}%
^{+}<T_{u_{0}}^{-},X_{T_{v_{0}}^{+}}=v_{0}\}}\right]  -\mathbb{E}_{x_{0}%
}\left[  e^{-qT_{v}^{+}}1_{\{T_{v}^{+}<T_{u}^{-},X_{T_{v}^{+}}=v,X_{T_{v_{0}%
}^{+}}=v_{0}\}}\right]  \right\vert \nonumber\\
&  =\mathbb{E}_{x_{0}}\left[  e^{-qT_{v_{0}}^{+}}1_{\{T_{v_{0}}^{+}<T_{u}%
^{-},X_{T_{v_{0}}^{+}}=v_{0}\}}\right]  -\mathbb{E}_{x_{0}}\left[
e^{-qT_{v}^{+}}1_{\{T_{v}^{+}<T_{u}^{-},X_{T_{v}^{+}}=v,X_{T_{v_{0}}^{+}%
}=v_{0}\}}\right] \nonumber\\
&  +\mathbb{E}_{x_{0}}\left[  e^{-qT_{v_{0}}^{+}}1_{\{T_{u}^{-}<T_{v_{0}}%
^{+}<T_{u_{0}}^{-},X_{T_{v_{0}}^{+}}=v_{0}\}}\right] \nonumber\\
&  \leq1-\mathbb{E}_{v_{0}}\left[  e^{-qT_{v}^{+}}1_{\{T_{v}^{+}<T_{u}%
^{-},X_{T_{v}^{+}}=v\}}\right]  +\mathbb{P}_{x_{0}}\left\{  T_{u}^{-}%
<T_{v_{0}}^{+}<T_{u_{0}}^{-}\right\}  . \label{10}%
\end{align}
Let $\zeta$ be the time of the first jump of the compound Poisson process $Z$
with jump rate $\lambda>0$.
\BIN{Note that $X$ will increase continuously up to time $\zeta$ as long as the initial value is positive. Since $v>v_0>0$, we have}%
\begin{equation}
1-\mathbb{E}_{v_{0}}\left[  e^{-qT_{v}^{+}}1_{\{T_{v}^{+}<T_{u}^{-}%
,X_{T_{v}^{+}}=v\}}\right]  \leq1-\mathbb{E}_{v_{0}}\left[  e^{-qT_{v}^{+}%
}1_{\{\textcolor[rgb]{0.0,0.0,0.0}{\zeta}>T_{v}^{+}\}}\right]  =1-\left(
\frac{v}{v_{0}}\right)  ^{-(q+\lambda)/\mu}. \label{11}%
\end{equation}
\BIN{By conditioning on $X_{T_{u}^{-}-}$, one obtains}
\begin{align}
\mathbb{P}_{x_{0}}\left\{  T_{u}^{-}<T_{v_{0}}^{+}<T_{u_{0}}^{-}\right\}   &
\leq\int_{u}^{v_{0}}\mathbb{P}_{x_{0}}\left\{  X_{T_{u}^{-}-}\in
\mathrm{d}y\right\}  \mathbb{P}\left\{  y-u<J\leq y-u_{0}\right\} \nonumber\\
&  \leq\max_{u_{0}\leq y\leq v_{0}}\left(  F(y-u_{0})-F(y-u)\right)  .
\label{12}%
\end{align}
Since $F(\cdot)$ is continuous, and hence uniformly continuous for
$y\in\lbrack0,v_{0}-u_{0}]$, it follows that the right-hand side of (\ref{12})
vanishes as $u\downarrow u_{0}$. From (\ref{B1 in})--(\ref{12}), we conclude
that (\ref{B1eq}) holds.

\BIN{Note that although (\ref{12}) only uses the continuity of $F$ on $[0,\infty)$,
the proof for $C^{(q,s)}(x;u,v)$ will use the continuity of $F$ on
$(-\infty,0]$.}\bigskip
\end{proof}

\begin{proposition}
\label{propA1} Assumptions (\textbf{A1})-(\textbf{A3}) hold for the piecewise
exponential Markov process (\ref{PEMP}) with a continuous jump size
distribution $F(\cdot)$ \BIN{and initial value $X_0\geq a$}.
\end{proposition}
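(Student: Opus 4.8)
The plan is to exploit the piecewise-deterministic structure of the PEMP (\ref{PEMP}): between successive jumps of $Z$ the process $X$ follows the deterministic flow $\phi_t(x):=x\e^{\mu t}$, while the first jump time $\zeta$ is exponential with rate $\lambda$, independent of its size $J\sim F$. Since the initial value $X_0\geq a$ guarantees $X_t>0$ (hence $\mu X_t>0$) for all $t<\tau_a$, the flow is strictly increasing and $X$ creeps continuously through every upward level before $\tau_a$; in particular $T_x^+=0$ and $X_{T_x^+}=x$ $\mathbb{P}_x$-a.s., so the necessary regularity noted in Remark \ref{rmk31} holds. For $0<u<x<v$, conditioning on $\zeta$ and on whether it precedes the (finite) deterministic climb time $t_v(x):=\mu^{-1}\log(v/x)$ from $x$ to $v$ yields renewal equations for each exit quantity. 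For instance,
\[
B_1^{(q)}(x;u,v) = \e^{-(q+\lambda)t_v(x)} + \int_0^{t_v(x)}\lambda\e^{-(q+\lambda)t}\int_{(u-\phi_t(x),\,v-\phi_t(x))} B_1^{(q)}(\phi_t(x)+j;u,v)\,\mathrm{d}F(j)\,\mathrm{d}t,
\]
the first term accounting for a jump-free ascent to $v$ and the integral for the first jump landing back inside $(u,v)$; the analogous equations with the jump overshooting $v$ (resp. undershooting $u$) govern $B_2^{(q,s)}$ (resp. $C^{(q,s)}$).

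Next I would extract the $\varepsilon\downarrow0$ asymptotics. Taking $v=x+\varepsilon$, $u=x-a$, one has $t_v(x)=\mu^{-1}\log(1+\varepsilon/x)\sim\varepsilon/(\mu x)$, so the $t$-window shrinks to zero and only a single jump contributes at leading order. Expanding the renewal equation for $B_1^{(q)}$ gives
\[
\frac{1-B_1^{(q)}(x;x-a,x+\varepsilon)}{\varepsilon} \longrightarrow \frac{1}{\mu x}\left((q+\lambda) - \lambda\int_{(-a,0)} B_1^{(q)}(x+j;x-a,x)\,\mathrm{d}F(j)\right) =: b_{a,1}^{(q)}(x),
\]
and parallel computations produce $b_{a,2}^{(q,s)}(x)$ and $c_a^{(q,s)}(x)$, each of the form $\lambda/(\mu x)$ times a direct jump term (a jump overshooting $v$ for $B_2$, undershooting $u$ for $C$) plus a recursion term integrated against $F$ over $(-a,0)$. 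Continuity of $F$ is what makes these limits clean: it forces the boundary contributions from jumps landing exactly at the moving endpoints to vanish, and it is used precisely on $(-\infty,0]$ for the $C^{(q,s)}$ term, as flagged after Lemma \ref{lem pemp}. The equality of the four difference quotients in each of \textbf{(A1)}--\textbf{(A3)} then follows from the joint continuity of the exit quantities established in Lemma \ref{lem pemp}, since the four parametrizations drive $(x,u,v)$ to the same limiting configuration and hence share the same limiting integrand.

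Finally I would dispatch the remaining technical clauses. The integrand $b_{a,1}^{(q)}(w)$ is continuous in $w$ (via Lemma \ref{lem pemp}) and of order $1/w$ on $(0,\infty)$, so $\int_x^y b_{a,1}^{(q)}(w)\,\mathrm{d}w<\infty$ for any $0<x<y$, giving the integrability in \textbf{(A1)}; the right-continuity of $s\mapsto b_{a,2}^{(q,s)}(x)$ at $s=0$ follows from dominated convergence applied to its integral representation. With $b_{a,1}^{(q)}$, $b_{a,2}^{(q,s)}$ and $c_a^{(q,s)}$ thus in hand, Theorem \ref{thm markov} applies.

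The main obstacle will be the rigorous justification of the short-time expansion rather than its heuristic form. One must verify that the recursion term in each renewal equation genuinely contributes at the exact order $\varepsilon$ (it is $O(\varepsilon)$ because the $t$-integral is $O(\varepsilon)$ while the nested exit quantity is $O(1)$), that multi-jump contributions are truly $o(\varepsilon)$, and that one may pass $\varepsilon\downarrow0$ through both the $t$-integral and the jump law $\mathrm{d}F$ uniformly. Controlling these remainders, and confirming that no mass of $F$ accumulates at the shrinking interval boundaries, is exactly where the continuity of $F$ and the strict positivity $\mu x>0$ enter decisively.
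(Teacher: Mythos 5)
Your proposal follows essentially the same route as the paper's proof: conditioning on the first jump time of the compound Poisson component to obtain renewal equations for $B_1^{(q)}$, $B_2^{(q)}$ and $C^{(q,s)}$, extracting the $\varepsilon\downarrow 0$ asymptotics of the shrinking time window $t_{x+\varepsilon}(x)\sim\varepsilon/(\mu x)$, and invoking Lemma \ref{lem pemp} together with dominated convergence to identify the four limits in (\textbf{A1})--(\textbf{A3}), arriving at exactly the paper's formulas, e.g. $b_{a,1}^{(q)}(x)=\frac{q+\lambda}{\mu x}-\frac{\lambda}{\mu x}\int_{-a}^{0}B_{1}^{(q)}(x+w;x-a,x)F(\mathrm{d}w)$ for $x>a$. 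The remainder estimates you flag as the "main obstacle" are precisely what the paper dispatches via Lemma \ref{lem pemp} and dominated convergence, so your outline is correct and complete in the same sense as the published argument.
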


\begin{proof}
For $0<u<x<v$, by the strong Markov property, we have
\begin{align}
B_{1}^{(q)}(x;u,v)  &  =\mathbb{E}_{x}\left[  e^{-qT_{v}^{+}}1_{\{T_{v}%
^{+}<T_{u}^{-},X_{T_{v}^{+}}=v,\zeta>T_{v}^{+}\}}\right]  +\mathbb{E}%
_{x}\left[  e^{-qT_{v}^{+}}1_{\{T_{v}^{+}<T_{u}^{-},X_{T_{v}^{+}}%
=v,\zeta<T_{v}^{+}\}}\right] \nonumber\\
&  =\left(  \frac{v}{x}\right)  ^{-(q+\lambda)/\mu}+\lambda\int_{0}^{\frac
{1}{\mu}\ln\frac{v}{x}}e^{-(q+\lambda)t}\mathrm{d}t\int_{u-xe^{\mu t}%
}^{v-xe^{\mu t}}B_{1}^{(q)}(xe^{\mu t}+w;u,v)F(\mathrm{d}w). \label{B1}%
\end{align}
By Lemma \ref{lem pemp}, \BIN{Eq. (\ref{B1})}, and the
dominated convergence theorem, it is straightforward to verify that Assumption
(\textbf{A1}) holds and \BIN{for $x>a$,}%
\[
b_{a,1}^{(q)}(x)=\frac{q+\lambda}{\mu x}-\frac{\lambda}{\mu x}\int_{-a}%
^{0}B_{1}^{(q)}(x+w;x-a,x)F(\mathrm{d}w).
\]
\BIN{Note that we require $x>a$ as otherwise $x+w $ in the above equation could be negative for $w\in(-a,0)$, and then Lemma \ref{lem pemp} does not apply.}
Obviously, $\int_{x}^{y}b_{a,1}^{(q)}(w)\mathrm{d}w<\infty$ for all
$0<x<y<\infty$. Similarly, by conditioning on the first jump of $Z$,
\BIN{for $0<u<x<v$,}%
\begin{align*}
B_{2}^{(q)}(x,\mathrm{d}z;u,v)  &  =\lambda\int_{0}^{\frac{1}{\mu}\ln\frac
{v}{x}}e^{-(q+\lambda)t}F(v-xe^{\mu t}+\mathrm{d}z)\mathrm{d}t\\
&  +\lambda\int_{0}^{\frac{1}{\mu}\ln\frac{v}{x}}e^{-(q+\lambda)t}%
\mathrm{d}t\int_{u-xe^{\mu t}}^{v-xe^{\mu t}}B_{2}^{(q)}(xe^{\mu
t}+w,\mathrm{d}z;u,v)F(\mathrm{d}w),
\end{align*}
and
\[
C^{(q,s)}(x;u,v)=\lambda\int_{0}^{\frac{1}{\mu}\ln\frac{v}{x}}e^{-(q+\lambda
)t}\mathrm{d}t\int_{-\infty}^{v-xe^{\mu t}}C^{(q,s)}(xe^{\mu t}%
+w;u,v)F(\mathrm{d}w),
\]
where it is understood that $C^{(q,s)}(xe^{\mu t}+w;u,v)=e^{s(xe^{\mu t}%
+w-u)}$ for $w<u-xe^{\mu t}$. One can verify from Lemma \ref{lem pemp} and the
dominated convergence theorem that Assumptions (\textbf{A2}) and (\textbf{A3})
hold, \BIN{and for $x>a,$}
\[
\textcolor[rgb]{0.0,0.0,0.0}{b_{a,2}^{(q)}(x,\mathrm{d}z)}=\frac{\lambda
}{\mu x}F(\mathrm{d}z)+\frac{\lambda}{\mu x}\int_{-a}^{0}B_{2}^{(q)}%
(x+w,\mathrm{d}z;x-a,x)F(\mathrm{d}w),
\]
and
\[
c_{a}^{(q,s)}(x)=\frac{\lambda}{\mu x}\int_{-\infty}^{0}C^{(q,s)}%
(x+w;x-a,x)F(\mathrm{d}w).
\]
This ends the proof.
\end{proof}

\subsection{Assumption verification for Example \ref{eg JD}\label{ver JD}}

Let $U$ be the continuous component of $X$, which is a linear diffusion
process with the infinitesimal generator $\mathcal{L}_{U}=\frac{1}{2}%
\sigma^{2}(y)\frac{\mathrm{d}^{2}}{\mathrm{d}y^{2}}+\mu(y)\frac{\mathrm{d}%
}{\mathrm{d}y}.$ It is well-known that, for any $q>0$, there exist two
independent and positive solutions, denoted as $\phi_{q}^{\pm}(y)$, to the
Sturm-Liouville equation
\begin{equation}
\mathcal{L}_{U}\phi_{q}^{\pm}(y)=q\phi_{q}^{\pm}(y), \label{SL}%
\end{equation}
where $\phi_{q}^{+}(\cdot)$ is strictly increasing and $\phi_{q}^{-}(\cdot)$
is strictly decreasing. By the Lipschitz assumption on $\mu(\cdot)$ and
$\sigma(\cdot)$, it follows from the Schauder estimates (e.g., Theorem 6.14 of
Gilbarg and Trudinger \cite{GT01}) of Eq. (\ref{SL}) that $\phi_{q}^{\pm
}(\cdot)\in C^{2,\alpha}(\bar{\Omega})$ for any $\alpha\in(0,1]$ and any
compact set $\bar{\Omega}\subset%
%TCIMACRO{\U{211d} }%
%BeginExpansion
\mathbb{R}
%EndExpansion
$. Interested readers can refer to Section 4.1 of Gilbarg and Trudinger
\cite{GT01} for more detail on the H\"{o}lder space $C^{2,\alpha}(\bar{\Omega
})$.

We denote the first hitting time of $U$ to level $z\in%
%TCIMACRO{\U{211d} }%
%BeginExpansion
\mathbb{R}
%EndExpansion
$ by $H_{z}=\inf\{t>0:U_{t}=z\}$. It is well-known that, for $u\leq x\leq v$,%
\begin{equation}
\mathbb{E}_{x}\left[  e^{-qH_{u}}1_{\left\{  H_{u}<H_{v}\right\}  }\right]
=\frac{f_{q}(x,v)}{f_{q}(u,v)}\quad\text{and\quad}\mathbb{E}_{x}\left[
e^{-qH_{v}}1_{\left\{  H_{v}<H_{u}\right\}  }\right]  =\frac{f_{q}(u,x)}%
{f_{q}(u,v)}, \label{2sd}%
\end{equation}
where $f_{q}(x,y):=\phi_{q}^{+}(x)\phi_{q}^{-}(y)-\phi_{q}^{+}(y)\phi_{q}%
^{-}(x)$. Note that $f_{q}(x,y)$ is strictly decreasing in $x$ and strictly
increasing in $y$ with $f_{q}(x,x)=0$. In particular, for $u\leq x\leq v$, we
have
\begin{equation}
\mathbb{E}_{x}\left[  {e}^{-qH_{u}}\right]  =\frac{\phi_{q}^{-}(x)}{\phi
_{q}^{-}(u)}\quad\text{and\quad}\mathbb{E}_{x}\left[  {e}^{-qH_{v}}\right]
=\frac{\phi_{q}^{+}(x)}{\phi_{q}^{+}(v)}. \label{1sd}%
\end{equation}
For $\mathbf{e}_{q}$ an independent exponential random variable with mean
$1/q<\infty$, the $q$-potential measure of $U$ is given by
\[
r_{q}(x,y):=\frac{1}{q}\mathbb{P}_{x}\left\{  U_{\mathbf{e}_{q}}\in
\mathrm{d}y\right\}  /\mathrm{d}y=\left\{
\begin{array}
[c]{l}%
\frac{2}{q\sigma^{2}(y)}\frac{\phi_{q}^{+}(x)\phi_{q}^{-}(y)}{f_{q,1}%
(y,y)},\quad x\leq y,\\
\frac{2}{q\sigma^{2}(y)}\frac{\phi_{q}^{+}(y)\phi_{q}^{-}(x)}{f_{q,1}%
(y,y)},\quad x>y,
\end{array}
\right.
\]
where $f_{q,1}(x,y):=\frac{\partial}{\partial x}f_{q}(x,y).$ Furthermore, the
$q$-potential measure of $U$ killed on exiting the interval $[u,v]$, for
$u\leq x,y\leq v$, is given by
\begin{align}
\theta^{(q)}(x,y;u,v)  &  :=\frac{1}{q}\mathbb{P}_{x}\left(  U_{\mathbf{e}%
_{q}}\in\mathrm{d}y,\mathbf{e}_{q}<H_{u}\wedge H_{v}\right)  /\mathrm{d}%
y\nonumber\\
&  =r_{q}(x,y)-\frac{f_{q}(x,v)}{f_{q}(u,v)}r_{q}(u,y)-\frac{f_{q}(u,x)}%
{f_{q}(u,v)}r_{q}(v,y). \label{theta}%
\end{align}

The next lemma is an analogy of Lemma \ref{lem pemp}. Thanks to the diffusion
term in the jump diffusion model (\ref{JD}), we now allow for the presence of
atoms in the jump intensity measure $\nu(\cdot)$.

\begin{lemma}
\label{lem JD}Consider the jump diffusion model (\ref{JD}). For $q,s\geq0$ and
$u_{0}<x_{0}<v_{0}$, we have%
\[
\lim_{(u,v)\downarrow(u_{0},v_{0})}g(x_{0};u,v)=\lim_{(x,u)\uparrow
(x_{0},u_{0})}g(x;u,v_{0})=g(x_{0},u_{0},v_{0}),
\]
where $g(x;u,v)$ is any of the following functions: $B_{1}^{(q)}(x;u,v)$,
$B_{2}^{(q,s)}(x;u,v)$ and $C^{(q,s)}(x;u,v)$.
\end{lemma}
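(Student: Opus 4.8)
The plan is to mirror the pathwise argument used for Lemma \ref{lem pemp}, with the diffusion component $U$ of $X$ playing the role that the deterministic exponential growth played for the PEMP. As there, it suffices to treat $B_1^{(q)}$ and the first limit $(u,v)\downarrow(u_0,v_0)$; the remaining functionals and the second limit $(x,u)\uparrow(x_0,u_0)$ follow by entirely analogous estimates. Writing $\lambda:=\nu(\mathbb{R})<\infty$ for the (finite) jump rate and letting $\zeta$ be the epoch of the first jump of $X$ (an independent $\mathrm{Exp}(\lambda)$ time), the key observation is that on $\{t<\zeta\}$ the process $X$ coincides with $U$. Consequently every quantity below can be expressed through the diffusion hitting functionals (\ref{2sd})--(\ref{1sd}) and the killed potential density $\theta^{(q)}$ of (\ref{theta}), each of which is continuous in its spatial arguments because $\phi_q^{\pm}\in C^{2,\alpha}$ by the Schauder estimates.

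Following the decomposition (\ref{B1 in})--(\ref{10}) verbatim, I would split $|B_1^{(q)}(x_0;u_0,v_0)-B_1^{(q)}(x_0;u,v)|$ into a term controlling the upper boundary at $v$ and a term controlling the lower boundary at $u$. For the upper part I use that, before the first jump, $U$ creeps continuously to $v$: starting from $v_0$, the diffusion reaches $v$ before $u$ and before $\zeta$ with probability $f_{q+\lambda}(u,v_0)/f_{q+\lambda}(u,v)$, whence
\[
1-\mathbb{E}_{v_0}\!\left[e^{-qT_v^+}1_{\{T_v^+<T_u^-,X_{T_v^+}=v\}}\right]\le 1-\frac{f_{q+\lambda}(u,v_0)}{f_{q+\lambda}(u,v)}\xrightarrow[\ v\downarrow v_0\ ]{}0,
\]
by continuity of $f_{q+\lambda}$; the residual overshoot term $\mathbb{P}_{x_0}\{v_0<X_{T_{v_0}^+}\le v\}$ vanishes by right-continuity of the law of $X_{T_{v_0}^+}$, exactly as in the PEMP case (this term is now genuinely present because $X$ may jump over $v_0$).

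The essential new point is the lower term, i.e. showing $\mathbb{P}_{x_0}\{T_u^-<T_{v_0}^+<T_{u_0}^-\}\to0$ as $u\downarrow u_0$. In Lemma \ref{lem pemp} this was obtained from the continuity of $F$, which is unavailable here since the diffusion permits atoms in $\nu$. Instead I would exploit the regularity of $U$ at $u_0$: on the event in question the process enters the shrinking strip $[u_0,u]$ at time $T_u^-$ and must subsequently climb to $v_0$ before dropping below $u_0$, so by the strong Markov property
\[
\mathbb{P}_{x_0}\{T_u^-<T_{v_0}^+<T_{u_0}^-\}\le\sup_{y\in[u_0,u]}\mathbb{P}_y\{T_{v_0}^+<T_{u_0}^-\}.
\]
Splitting according to whether a jump precedes the exit of $U$ from $(u_0,v_0)$ gives the bound $f_0(u_0,y)/f_0(u_0,v_0)+\bigl(1-\phi_{\lambda}^-(y)/\phi_{\lambda}^-(u_0)\bigr)$, using (\ref{2sd}) with $q=0$ for the first summand and (\ref{1sd}) with $q=\lambda$ for $\mathbb{P}_y\{\zeta<H_{u_0}\}$ in the second. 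Since $f_0(u_0,u_0)=0$ and $\phi_\lambda^-$ is continuous, both summands vanish as $y\downarrow u_0$, uniformly for $y\in[u_0,u]$; this is precisely where the creeping (regularity at $u_0$) of the diffusion replaces the continuity of the jump law. I expect this lower estimate to be the main obstacle, as it is the only step that genuinely uses the diffusion and would fail for a pure-jump PEMP with atomic $F$.

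Finally, the analogues for $B_2^{(q,s)}$ and $C^{(q,s)}$ follow the same two-sided scheme: for $B_2$ one additionally invokes right-continuity of the overshoot law above $v_0$, while for $C^{(q,s)}$, which records the undershoot $u-X_{T_u^-}$, one controls both the creeping passage and the jump-induced undershoot at $u$, again showing that the contribution of the strip $[u_0,u]$ vanishes by the same regularity argument and passing limits by dominated convergence against the continuous kernels $f_q$, $r_q$ and $\theta^{(q)}$. The second limit $(x,u)\uparrow(x_0,u_0)$ is handled identically after interchanging the roles of the moving endpoints.
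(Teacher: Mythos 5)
Your proposal is correct and follows essentially the same route as the paper: reuse the decomposition of Lemma \ref{lem pemp}, and replace its jump-law-continuity argument for the term $\mathbb{P}_{x_0}\{T_u^-<T_{v_0}^+<T_{u_0}^-\}$ by a split at the first jump epoch $\zeta$, bounding the two pieces via (\ref{2sd}) and (\ref{1sd}) and concluding from $f(u_0,u_0)=0$ and continuity of $\phi^{\pm}$. Your version is in fact marginally more careful than the paper's at two points — you take a supremum over the entry position $y\in[u_0,u]$ rather than tacitly starting from $u$, and you spell out the diffusion analogue $1-f_{q+\lambda}(u,v_0)/f_{q+\lambda}(u,v)$ of the upper-boundary estimate (\ref{11}) that the paper leaves implicit — but these are refinements of the same argument, not a different one.
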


\begin{proof}
We can follow the same proof as Lemma \ref{lem pemp} except for the term
$\mathbb{P}_{x_{0}}\left\{  T_{u}^{-}<T_{v_{0}}^{+}<T_{u_{0}}^{-}\right\}  $
in (\ref{12}), which will be handled distinctly here. We have $X_{t}=U_{t}$
a.s. for $t<\zeta$, where $\zeta$ is the first time a jump occurs which
follows an exponential distribution with mean $1/\lambda=1/\nu(\mathbb{R})>0$.
For any $u_{0}<u<x_{0}<v_{0}$, by (\ref{2sd}) and (\ref{1sd}), we have
\begin{align*}
\mathbb{P}_{x_{0}}\left\{  T_{u}^{-}<T_{v_{0}}^{+}<T_{u_{0}}^{-}\right\}   &
\leq\mathbb{P}_{u}\left\{  T_{v_{0}}^{+}<T_{u_{0}}^{-}\right\} \\
&  =\mathbb{P}_{u}\left\{  T_{v_{0}}^{+}<T_{u_{0}}^{-},\xi>T_{v_{0}}%
^{+}\right\}  +\mathbb{P}_{u}\left\{  \xi\leq T_{v_{0}}^{+}<T_{u_{0}}%
^{-}\right\} \\
&  \leq\mathbb{E}_{u}\left[  e^{-\lambda H_{v_{0}}}1_{\left\{  H_{v_{0}%
}<H_{u_{0}}\right\}  }\right]  +1-\mathbb{E}_{u}\left[  e^{-\lambda H_{u_{0}}%
}\right] \\
&  =\frac{f_{q}(u_{0},u)}{f_{q}(u_{0},v_{0})}+1-\frac{\phi_{q}^{-}(u)}%
{\phi_{q}^{-}(u_{0})}.
\end{align*}
Therefore, it follows that $\lim_{u\downarrow u_{0}}\mathbb{P}_{x_{0}}\left\{
T_{u}^{-}<T_{v_{0}}^{+}<T_{u_{0}}^{-}\right\}  =0$ by $f_{q}(u_{0},u_{0}%
)=0$.\bigskip
\end{proof}

\begin{proposition}
Assumptions (\textbf{A1})-(\textbf{A3}) hold for the jump diffusion model
(\ref{JD}).
\end{proposition}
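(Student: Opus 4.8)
The plan is to mirror the proof of Proposition~\ref{propA1} for the PEMP, replacing the deterministic exponential drift by the linear diffusion $U$ and its killed resolvent $\theta^{(q)}$. The starting point is a first-jump decomposition. Fix $u<x<v$, let $\zeta\sim\mathrm{Exp}(\lambda)$ with $\lambda=\nu(\mathbb{R})$ be the first jump time of $X$, and recall $X_t=U_t$ for $t<\zeta$. Conditioning on whether $[u,v]$ is left continuously by $U$ before $\zeta$, or a jump occurs first from some visited position $y$ prior to exit, the strong Markov property together with $(\ref{2sd})$ and the definition $(\ref{theta})$ of $\theta^{(q)}$ yields a renewal equation; for instance, with the convention that $B_1^{(q)}(\cdot;u,v)$ vanishes outside $(u,v)$,
\begin{equation*}
B_1^{(q)}(x;u,v)=\frac{f_{q+\lambda}(u,x)}{f_{q+\lambda}(u,v)}+\int_u^v\theta^{(q+\lambda)}(x,y;u,v)\left(\int_{\mathbb{R}}B_1^{(q)}(y+\gamma(y,z);u,v)\,\nu(\mathrm{d}z)\right)\mathrm{d}y,
\end{equation*}
and entirely analogous identities hold for $B_2^{(q)}(x,\mathrm{d}z;u,v)$ and $C^{(q,s)}(x;u,v)$, the latter additionally carrying a continuous downward-exit term $f_{q+\lambda}(x,v)/f_{q+\lambda}(u,v)$ coming from $U$ creeping to $u$ before $\zeta$ (again via $(\ref{2sd})$).

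Next I would extract the one-sided derivatives of Assumptions (\textbf{A1})--(\textbf{A3}) by a first-order expansion in $\varepsilon$. The key analytic input is that the Schauder estimates give $\phi_q^\pm\in C^{2,\alpha}$, so $f_q(\cdot,\cdot)$ is $C^2$ and the resolvents $r_q$, $\theta^{(q)}$ are differentiable in their spatial and boundary arguments away from the diagonal $x=y$. Using $f_q(x,x)=0$, Taylor-expanding the ratios $f_{q+\lambda}(u,x)/f_{q+\lambda}(u,v)$ produces the continuous (creeping) contribution, while expanding $(\ref{theta})$ shows that $\varepsilon^{-1}\theta^{(q+\lambda)}(x,y;x-a,x+\varepsilon)$ converges as $\varepsilon\downarrow0$ to an explicit density $\rho^{(q+\lambda)}(x,y)$ on $(x-a,x)$ built from $r_{q+\lambda}$ and the diagonal slope $f_{q+\lambda,2}(\cdot,\cdot)$. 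Passing this limit inside the $\nu$- and $y$-integrals then gives the existence of $b_{a,1}^{(q)}(x)$, $b_{a,2}^{(q,s)}(x)$ and $c_a^{(q,s)}(x)$, each as a sum of a diffusion-boundary term and a $\rho^{(q+\lambda)}$-weighted jump integral over $(x-a,x)$.

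To obtain the four-fold equality of limits in each assumption, I would observe that the four interval/starting-point configurations differ only by $O(\varepsilon)$ shifts of the endpoints and the initial point. Since $f_{q+\lambda}$, $r_{q+\lambda}$, and hence $\rho^{(q+\lambda)}$, are smooth, such shifts do not affect the first-order coefficient, and Lemma~\ref{lem JD} guarantees that the limiting integrands $\int_{\mathbb{R}}B_1^{(q)}(y+\gamma(y,z);x-a,x)\,\nu(\mathrm{d}z)$ (and their $B_2$, $C$ analogues) are the common limit in all four cases; the four limits therefore coincide. Finally, $\int_x^y b_{a,1}^{(q)}(w)\,\mathrm{d}w<\infty$ follows because $b_{a,1}^{(q)}$ is continuous, hence locally bounded, in $x$, and the right-continuity of $s\mapsto b_{a,2}^{(q,s)}(x)$ at $s=0$ follows from dominated convergence applied to $b_{a,2}^{(q,s)}(x)=\int_{\mathbb{R}_+}e^{-sz}b_{a,2}^{(q)}(x,\mathrm{d}z)$ for the finite measure $b_{a,2}^{(q)}(x,\mathrm{d}z)$.

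The main obstacle I anticipate is the interchange of the limit $\varepsilon\downarrow0$ with the double integral in the jump term: the integrand $B_1^{(q)}(y+\gamma(y,z);u,v)$ depends on $\varepsilon$ through the moving endpoints $(u,v)=(x-a,x+\varepsilon)$ (and its three variants), while the measure $\varepsilon^{-1}\theta^{(q+\lambda)}(x,\cdot;u,v)\,\mathrm{d}y$ is itself only converging. Controlling this jointly, rather than pointwise, is precisely where Lemma~\ref{lem JD} is essential, and the uniform bounds $B_1,B_2\le1$ supply the domination; crucially, the diffusion component makes this step go through even when $\nu$ has atoms, which is why (unlike the PEMP case) no continuity of the jump law is required. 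For $C^{(q,s)}$ one must additionally verify that the overshoot weight $e^{-s(u-X_{T_u^-})}$ remains dominated, which is immediate since $s\ge0$ bounds this factor by the undiscounted downward-exit contribution.
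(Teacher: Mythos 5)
Your proposal is correct and follows essentially the same route as the paper: the identical first-jump renewal equations expressing $B_1^{(q)}$, $B_2^{(q)}$, $C^{(q,s)}$ through $f_{q+\lambda}$ and the killed resolvent $\theta^{(q+\lambda)}$, the same Taylor expansion of $\theta^{(q+\lambda)}$ at the shrinking boundary (your $\rho^{(q+\lambda)}$ is the paper's $\tilde\theta_a^{(q+\lambda)}$, up to the antisymmetry $f_q(u,v)=-f_q(v,u)$), and the same reliance on Lemma~\ref{lem JD} plus dominated convergence to pass the limit through the jump integrals and to equate the four limit configurations. Your closing observations — that the diffusion component is what permits atoms in $\nu$, and that integrability of $b_{a,1}^{(q)}$ follows from continuity of $\phi_q^{\pm}$ — also match the paper's treatment.
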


\begin{proof}
By the strong Markov property, (\ref{2sd}) and (\ref{theta}), for $u<x<v$, it
follows that%
\begin{align*}
&  B_{1}^{(q)}(x;u,v)\\
&  =\mathbb{E}_{x}\left[  e^{-qT_{v}^{+}}1_{\{T_{v}^{+}<T_{u}^{-},T_{v}%
^{+}=v,\zeta>T_{v}^{+}\}}\right]  +\mathbb{E}_{x}\left[  e^{-qT_{v}^{+}%
}1_{\{T_{v}^{+}<T_{u}^{-},T_{v}^{+}=v,\zeta<T_{v}^{+}\}}\right] \\
&  =\mathbb{E}_{x}\left[  e^{-(q+\lambda)H_{v}}1_{\{H_{v}<H_{u}\}}\right]
+\int_{u}^{v}\mathbb{E}_{x}\left[  e^{-q\zeta}1_{\{\zeta<H_{u}\wedge
H_{v},U_{\zeta}\in\mathrm{d}y\}}\right]  \int_{\mathbb{R}}B_{1}^{(q)}%
(y+\gamma(y,w);u,v)\frac{\nu(\mathrm{d}w)}{\lambda}\\
&  =\frac{f_{q+\lambda}(u,x)}{f_{q+\lambda}(u,v)}+\int_{u}^{v}\theta
^{(q+\lambda)}(x,y;u,v)\mathrm{d}y\int_{\mathbb{R}}B_{1}^{(q)}(y+\gamma
(y,w);u,v)\nu(\mathrm{d}w),
\end{align*}
where it is understood that $B_{1}^{(q)}(y+\gamma(y,w);u,v)=0$ if
$\gamma(y,w)>v-y$ or $\gamma(y,w)<u-y$. By Lemma \ref{lem JD}, the dominated
convergence theorem, and the identity $f_{q+\lambda}(u,v)=-f_{q+\lambda}%
(v,u)$, we can verify that Assumption (\textbf{A1}) holds with
\[
b_{a,1}^{(q)}(x)=\frac{-f_{q+\lambda,1}(x-a,x)}{f_{q+\lambda}(x-a,x)}%
-\int_{x-a}^{x}\tilde{\theta}_{a}^{(q+\lambda)}(x,y)\mathrm{d}y\int
_{\mathbb{R}}B_{1}^{(q)}(y+\gamma(y,w);x-a,x)\nu(\mathrm{d}w),
\]
where we write $\tilde{\theta}_{a}^{(q+\lambda)}(x,y):=-\frac{f_{q+\lambda
,1}(x-a,x)}{f_{q+\lambda}(x-a,x)}r_{q+\lambda}(x,y)-r_{q+\lambda,1}%
(x,y)+\frac{f_{q+\lambda,1}(x,x)}{f_{q+\lambda}(x-a,x)}r_{q+\lambda}(x-a,y)$
and $r_{q+\lambda,1}(x,y):=\frac{\partial}{\partial x}r_{q+\lambda}(x,y).$ The
integrability of $b_{a,1}^{(q)}(\cdot)$ follows from the continuity of the
$\phi_{q}^{+}(\cdot)$ and $\phi_{q}^{-}(\cdot)$.

Similarly, by the strong Markov property of $X$, (\ref{2sd}) and
(\ref{theta}), we have%
\[
B_{2}^{(q)}(x,\mathrm{d}z;u,v)=\int_{u}^{v}\theta^{(q+\lambda)}%
(x,y;u,v)\mathrm{d}y\int_{\mathbb{R}}B_{2}^{(q)}(y+\gamma(y,w),\mathrm{d}%
z;u,v)\nu(\mathrm{d}w),
\]
and%
\[
C^{(q,s)}(x;u,v)=\frac{f_{q+\lambda}(x,v)}{f_{q+\lambda}(u,v)}+\int_{u}%
^{v}\theta^{(q+\lambda)}(x,y;u,v)\mathrm{d}y\int_{\mathbb{R}}C^{(q,s)}%
(y+\gamma(y,w);u,v)\nu(\mathrm{d}w).
\]
One can verify from Lemma \ref{lem JD} that Assumptions (\textbf{A2}) and
(\textbf{A3}) hold with
\[
b_{2,a}^{(q)}(x,\mathrm{d}z)=\int_{x-a}^{x}\tilde{\theta}_{a}^{(q+\lambda
)}(x,y)\mathrm{d}y\int_{\mathbb{R}}B_{2}^{(q)}(y+\gamma(y,w),\mathrm{d}%
z;x-a,x)\nu(\mathrm{d}w),
\]
and
\[
c_{a}^{(q,s)}(x)=\frac{-f_{q+\lambda,1}(x,x)}{f_{q+\lambda}(x-a,x)}+\int
_{x-a}^{x}\tilde{\theta}^{(q+\lambda)}(x,y)\mathrm{d}y\int_{\mathbb{R}%
}C^{(q,s)}(y+\gamma(y,z);x-a,x)\nu(\mathrm{d}w).
\]
This completes the proof.
\end{proof}

\baselineskip13pt

\end{document}